\newtheorem{theorem}{Theorem}
\newtheorem{proposition}{Proposition}
\newtheorem{definition}{Definition}
\newtheorem{corollary}{Corollary}
\newtheorem{lemma}{Lemma}
\theoremstyle{remark}
\newtheorem{remark}{Remark}
\def\BibTeX{{\rm B\kern-.05em{\sc i\kern-.025em b}\kern-.08em
    T\kern-.1667em\lower.7ex\hbox{E}\kern-.125emX}}
\begin{document}
\onecolumn
\setstretch{1.5}
\title{Communication Complexity of Exact Sampling under R\'enyi Information 
\thanks{The authors are with the Department of Mathematics and Statistics, Queen's University, Kingston, Ontario, Canada. This work was presented in part at the Recent Results Session of the 2025 IEEE International Symposium on Information Theory and the 61st Allerton Conference on Communication, Control, and Comp     uting. This work was supported by the Natural Sciences and Engineering Research Council of Canada. Email: \{spencer.hill, fa, tamas.linder\}@queensu.ca.}}

\author{Spencer Hill, Fady Alajaji, and Tam\'as Linder}

\maketitle
\thispagestyle{plain}
\begin{abstract} 
We study the problem of exact sampling under an exponential communication cost, specifically Campbell's average codeword length $L(t)$ of order $t$~\cite{campbell1965coding}, and R\'enyi's entropy. We provide a lower bound on the Campbell cost of exact sampling that grows approximately as $D_{\frac{1}{\alpha}}(P||Q)$, the R\'enyi divergence of order $1/\alpha$, with $\alpha = \frac{1}{1+t}$. Using the Poisson functional representation of Li and El Gamal~\cite{sfrl}, we prove an upper bound on $L(t)$ whose leading R\'enyi divergence term has order within $\epsilon$ of that of the lower bound. Our results reduce to the bounds of Harsha et al.~\cite{harsha2010communication} as $\alpha \to 1$. We also provide numerical examples comparing the bounds in the cases of normal and Laplacian distributions, demonstrating that the upper and lower bounds are typically within 5-10 bits of each other. Our results characterize exactly the optimal asymptotic Campbell cost $L(t)$ per sample as the number of independent and identically distributed (i.i.d.) samples grows to infinity. We show that under the exponential cost, any causal sampler performs strictly worse asymptotically than noncausal samplers. This contrasts with the case of expected message length, where both causal and noncausal samplers have the same optimal asymptotic cost. 
\end{abstract}
\begin{IEEEkeywords}
    Communication complexity, channel simulation, R\'enyi entropy, R\'enyi divergence, $\alpha$-mutual information, R\'enyi conditional entropy, Poisson functional representation, noncausal sampling, common randomness, variable-length codes, exponential cost.
\end{IEEEkeywords}
\section{Introduction}
We consider the problem of communicating a sample from a probability distribution $P$, given that the sender and receiver have access to a shared source of randomness $\qty{U_i}_{i \geq 1}$ drawn independently according to another probability distribution $Q$. This problem, called exact sampling, is visualized in Fig.~\ref{fig:blockdiagram}. 
\begin{figure*}[ht]
    \centering
    \includegraphics{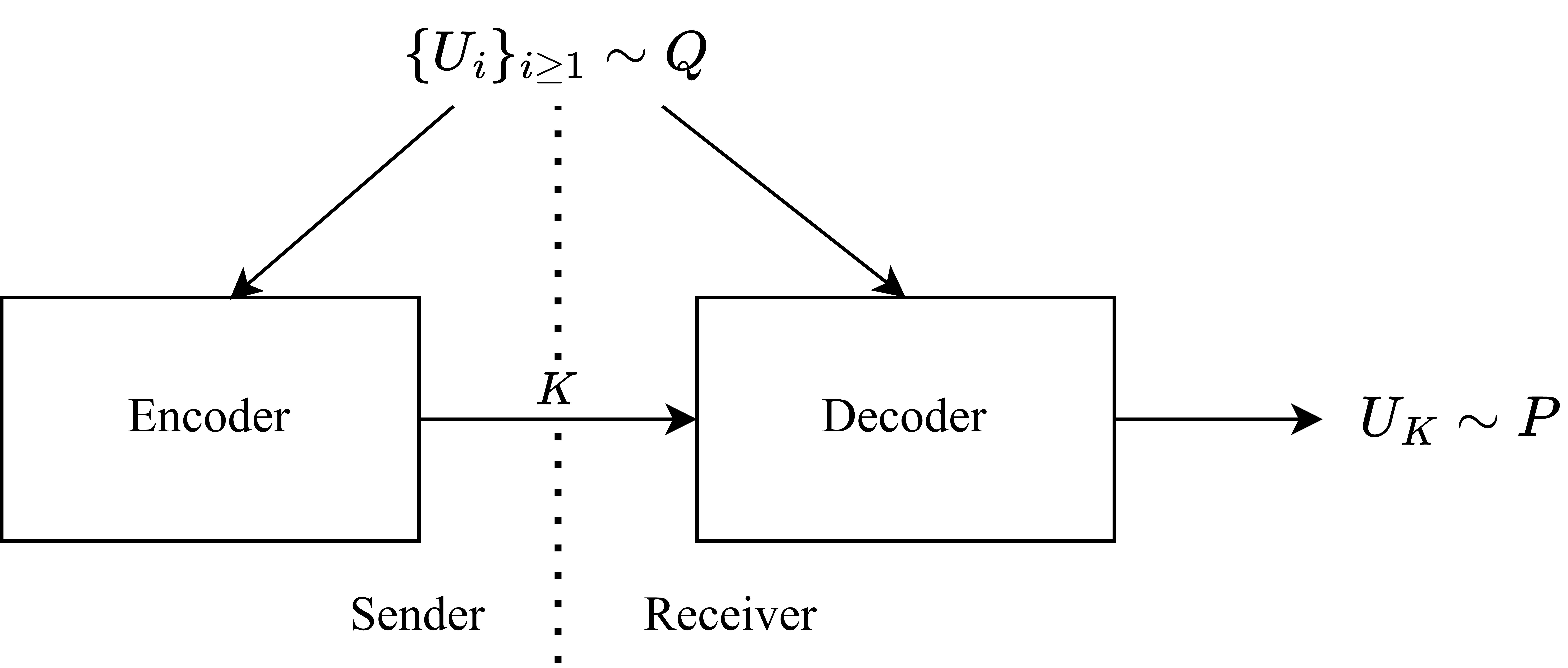}
    \caption{Block diagram of the exact sampling problem.}
    \label{fig:blockdiagram}
\end{figure*}
As studied in~\cite{harsha2010communication, braverman2014public, sfrl}, among others, the key question is how to design a communication protocol such that the expected message length needed to communicate the sample from $P$ is minimized. Naively communicating a sample from $P$ using lossless coding is often impossible, as in principle, the sample can have a continuous distribution. Instead, protocols aim to transmit an index $K \in \mathbb{N}$, where $\mathbb{N} = \qty{1, 2,\ldots}$ is the set of positive integers, such that the $K$th sample in the shared randomness has exact distribution $P$, which we write as $U_K \sim P$. 

The goal is then to minimize the expected message length of $K$, or (almost) equivalently, to minimize $H(K)$, where $H(\cdot)$ denotes Shannon entropy. Formally, with $\mathcal{C} : \mathbb{N} \rightarrow \qty{0,1}^*$ denoting a uniquely decodable binary variable-length code and $l(\mathcal{C}(k))$ denoting the length of the codeword of~$k$, the goal is to design a sampling algorithm and code $\mathcal{C}$ such that $\mathbb{E}[l(\mathcal{C}(K))]$ is minimized. This is intimately related to the problem of channel simulation (visualized in Fig.~\ref{fig:channelsimulationblock}), where $X$ and $Y$ are general random variables with joint distribution $\text{P}_{XY}$ and we let $P = \text{P}_{Y|X}(\,\cdot\, \mid x)$ and $Q = \text{P}_Y$; i.e., for input $x$ drawn via $\text{P}_X$, we wish to sample from the conditional distribution $\text{P}_{Y \mid X}(\, \cdot \, \mid x)$ given shared access to the marginal distribution $\text{P}_Y$. 
\begin{figure*}[ht]
    \centering
    \includegraphics{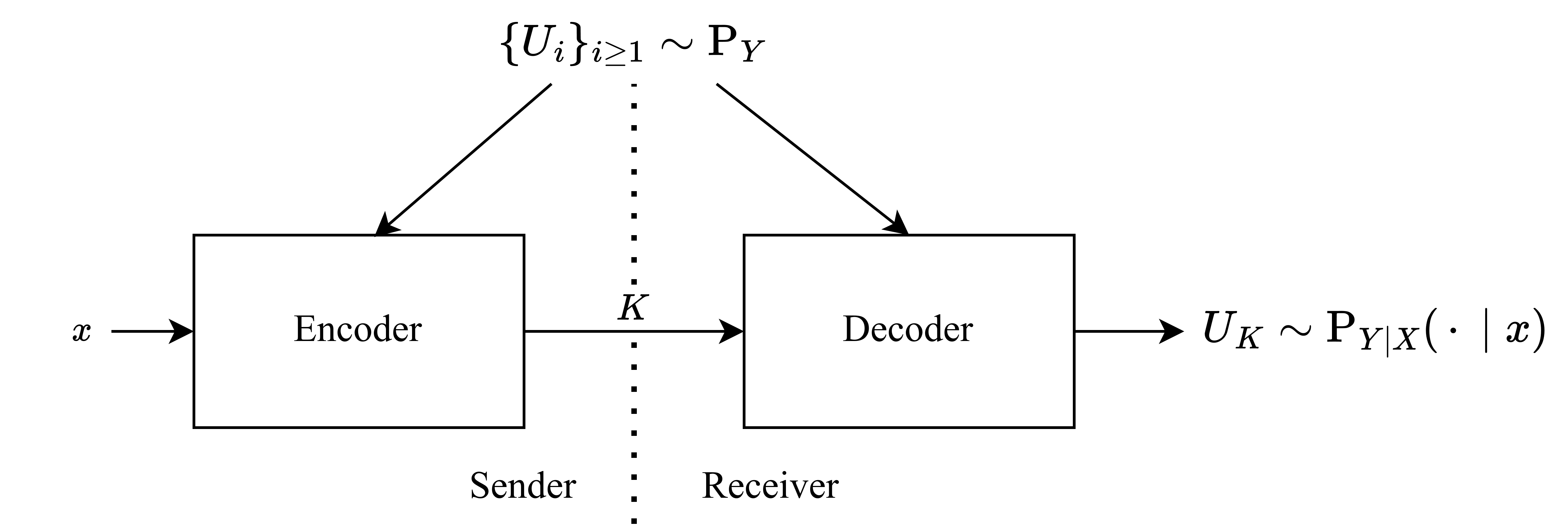}
    \caption{Block diagram of the channel simulation problem using sampling.}
    \label{fig:channelsimulationblock}
\end{figure*}
Channel simulation, also called reverse channel coding or relative entropy coding, has recently gained significant theoretical and practical attention due to its promise as an alternative for quantization in deep learning-based compression systems~\cite{flamich2020compressing, theis2022algorithms}. There has been a corresponding research effort to understand the theoretical limits of channel simulation and its related problems~\cite{goc2024channel, sriramu2024optimal}. 

Generalizing, one may also be concerned about costs other than the expected message length.  Motivated by Campbell's lossless source coding theorem~\cite{campbell1965coding}, we consider one-shot exact sampling under an exponential storage cost. Such a cost is appropriate for applications where buffer overflow can occur~\cite{jelinek1968buffer, 1056322,blumer1988renyi, somazzi2024nonlinear}, and therefore, the cost of long codewords is especially significant~\cite{courtade2014cumulant, courtade2014variable, saito2022non}. Here for a given $t > 0$, we aim to minimize the average codeword length of order $t$, also called the normalized cumulant generating function of codeword lengths~\cite{courtade2014cumulant, courtade2014variable, saito2022non}, given by
\begin{equation}
    L(t) = \frac{1}{t} \log(\mathbb{E}[2^{tl(\mathcal{C}(K))}]). \label{eq:campbellcostoriginal}
\end{equation}
As $t \rightarrow 0$ in~\eqref{eq:campbellcostoriginal}, by L'H\^opital's rule, we recover the average codeword length, $\mathbb{E}[l(\mathcal{C}(K))]$. In this work, we provide R\'enyi generalizations of prior results to the Campbell cost $L(t)$. In particular, we lower bound $L(t)$ for any sampling algorithm and use the Poisson functional representation to prove an upper bound on $L(t)$ whose leading R\'enyi divergence term has order within $\epsilon$ of the lower bound. We also describe an operational procedure for encoding $K$ with a universal code and prove an upper bound on $L(t)$ that reduces to the bound of Harsha et al.~\cite{harsha2010communication} as $t \to 0$. Qualitatively, the lower and upper bounds match in shape and are within 5-10 bits of each other for most $0 < \alpha = \frac{1}{1+t} < 1$, even for distributions where $L(t)$ is very large. Our bounds admit closed forms for many distributions~\cite{gil2013renyi} and are otherwise easily computed numerically, unlike the natural bound $L(t) < H_\alpha(K) + 1$ (Proposition~\ref{prop:campbellcost}), whose right-hand side is computable only for special $P, Q$ under the Poisson functional representation but quickly becomes intractable as $L(t)$ grows. We also analyze the Campbell cost in the asymptotic regime, where the goal is for the receiver to sample from the product distribution $P^{\otimes n}$ after receiving a single message. We derive exactly the optimal rate $\lim_{n \to \infty} L(t) / n$ over all sampling algorithms, and show that under mild assumptions, any causal sampler~\cite{liu2018rejection} (which examines the shared sequence one at a time) does strictly worse than a noncausal sampling algorithm. This differs from the case of the expected message length, where noncausal and causal algorithms achieve the same optimal asymptotic communication rate. 
\par
The remainder of the paper is organized as follows. Section~\ref{sec:preliminaries} contains some preliminary definitions and a review of the relevant literature, as well as a detailed summary of our main contributions. Section~\ref{sec:lowerbounds} lower bounds $L(t)$ in the exact sampling problem, and Section~\ref{sec:upperbounds} describes two communication protocols using the Poisson functional representation which closely approximate this lower bound. Section~\ref{sec:asymptotics} extends our results to the asymptotic regime, and Section~\ref{sec:simulations} gives numerical comparisons of the lower and upper bounds in the cases of normal and Laplacian distributions and a discussion of the difficulty of computing $H_\alpha(K)$.

\textit{Notation:} For $P, Q$ probability distributions, we write $P \ll Q$ to indicate that $P$ is absolutely continuous with respect to $Q$, in which case $\frac{\dd P}{\dd Q}$ denotes the Radon-Nikodym derivative. For a distribution $P$ on $\mathcal{U}$, we use $P^{\otimes n}$ to denote the product distribution on $\mathcal{U}^n$ corresponding to $n$ independent samples from $P$. We let $\qty{0, 1}^*$ denote the collection of all finite-length binary words, i.e., $\qty{0, 1}^* = \qty{0, 1, 00, 01, 10, 11, 000, \ldots}$. We write $\mathcal{N}(\mu, \sigma)$ for the normal distribution with mean $\mu$ and variance $\sigma^2$. Throughout, $\log$ is in base 2, all information measures are in bits, $\ln$ denotes the natural logarithm, and $\Gamma$ is the gamma function.

\section{Preliminaries and Contributions Summary} \label{sec:preliminaries}
\subsection{R\'enyi Information Measures}
Let $X$ be a discrete random variable with alphabet $\mathcal{X}$ and distribution $\text{P}_X$. For $\alpha \in (0, 1) \cup (1, \infty)$, we define the R\'enyi entropy of order $\alpha$ of $X$ as~\cite{renyi1961measures}
\begin{equation}
    H_\alpha(X) = \frac{1}{1-\alpha}\log(\sum_{x \in \mathcal{X}} \text{P}_X(x)^\alpha). \label{eq:Renyientropy}
\end{equation}
$H_\alpha(X)$ is nonincreasing in $\alpha$, and if $H_\alpha(X) < \infty$ for some $0 < \alpha < 1$, then $\lim_{\alpha \nearrow 1} H_\alpha(X) = H(X)$, recovering the Shannon entropy. In~\cite{campbell1965coding}, Campbell provided an operational meaning to the R\'enyi entropy by connecting it with the average code length of order $t$, $L(t)$. 
\begin{proposition}[\!\cite{campbell1965coding}]
\label{prop:campbellcost}
    Suppose the discrete random variable $X$ is encoded using a uniquely decodable binary code $\mathcal{C} : \mathcal{X} \rightarrow \qty{0, 1}^*$ such that each codeword has length $l(\mathcal{C}(x)) = n_x$, for $x \in \mathcal{X}$. Then the average code length of order $t>0$ given by
\begin{equation}
    L(t) = \frac{1}{t} \log(\sum_{x \in \mathcal{X}} \textup{P}_X(x) 2^{tn_x}), \label{eq:campbellcostdefinition}
\end{equation}
satisfies
\begin{equation}
    H_\alpha(X) \leq L(t), \label{eq:prop1lower}
\end{equation}
where $\alpha = \frac{1}{t + 1}$. Moreover, there exists a uniquely decodable binary code $\mathcal{C}$ such that
\begin{equation}
    L(t) < H_\alpha(X) + 1. \label{eq:prop1upper}
\end{equation}
\end{proposition}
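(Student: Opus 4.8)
My plan is to prove both halves with the standard Campbell toolkit: the Kraft--McMillan inequality together with Hölder's inequality for the lower bound~\eqref{eq:prop1lower}, and an explicit Shannon-type code built on the ``escort'' (tilted) distribution for the upper bound~\eqref{eq:prop1upper}. Throughout I would lean on the bookkeeping identities implied by $\alpha = \frac{1}{1+t}$, namely $1-\alpha = \frac{t}{1+t}$, $\alpha t = 1-\alpha$, $\frac{\alpha}{1-\alpha} = \frac{1}{t}$, and $\frac{1+t}{t} = \frac{1}{1-\alpha}$; keeping these exponent relations straight is essentially the only place the argument can slip.

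For the lower bound, since $\mathcal{C}$ is uniquely decodable the Kraft--McMillan inequality gives $\sum_x 2^{-n_x} \le 1$. I would then write, using $\alpha t = 1-\alpha$,
\[
\textup{P}_X(x)^\alpha = \big(\textup{P}_X(x)\, 2^{t n_x}\big)^{\alpha}\big(2^{-n_x}\big)^{1-\alpha},
\]
and apply Hölder's inequality with the conjugate exponents $p = 1/\alpha$ and $q = 1/(1-\alpha)$, both of which exceed $1$ precisely because $\alpha \in (0,1)$, so this is the ordinary (not reverse) Hölder inequality. This yields $\sum_x \textup{P}_X(x)^\alpha \le \big(\sum_x \textup{P}_X(x)\,2^{t n_x}\big)^{\alpha}\big(\sum_x 2^{-n_x}\big)^{1-\alpha} \le \big(\sum_x \textup{P}_X(x)\,2^{t n_x}\big)^{\alpha}$, where the last step is Kraft. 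Taking logarithms and dividing by $1-\alpha > 0$ rearranges this to $H_\alpha(X) \le \frac{\alpha}{1-\alpha}\log\big(\sum_x \textup{P}_X(x)\,2^{t n_x}\big) = L(t)$, which is~\eqref{eq:prop1lower}.

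For the upper bound I would define the escort distribution $Q_X(x) = \textup{P}_X(x)^\alpha / \sum_y \textup{P}_X(y)^\alpha$ (assuming $H_\alpha(X) < \infty$, since otherwise~\eqref{eq:prop1upper} is vacuous) and choose codeword lengths $n_x = \lceil -\log Q_X(x)\rceil$. Then $\sum_x 2^{-n_x} \le \sum_x Q_X(x) = 1$, so by Kraft's inequality a prefix (hence uniquely decodable) code with these lengths exists. Using $n_x < -\log Q_X(x) + 1$ and again $\alpha t = 1-\alpha$, the one-line identity $\textup{P}_X(x)\,Q_X(x)^{-t} = \textup{P}_X(x)^{\alpha}\big(\sum_y \textup{P}_X(y)^\alpha\big)^{t}$ gives $\sum_x \textup{P}_X(x)\,2^{t n_x} < 2^t \sum_x \textup{P}_X(x)\,Q_X(x)^{-t} = 2^t \big(\sum_x \textup{P}_X(x)^\alpha\big)^{1+t}$. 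Applying $\frac{1}{t}\log(\cdot)$ and the identity $\frac{1+t}{t} = \frac{1}{1-\alpha}$ yields $L(t) < 1 + H_\alpha(X)$.

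The calculations are routine; the only genuine care needed is (i) checking that the Hölder exponents lie in the valid range for the forward inequality, which is exactly where $0 < \alpha < 1$ (equivalently $t > 0$) enters, and (ii) the exponent arithmetic linking $\alpha$ and $t$. I would also note in passing that when $\mathcal{X}$ is countably infinite the same argument goes through verbatim as long as $\sum_x \textup{P}_X(x)^\alpha < \infty$, and both bounds are trivial otherwise.
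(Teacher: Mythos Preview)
Your proof is correct and is essentially Campbell's original argument; however, the paper does not supply its own proof of this proposition---it is stated with attribution to~\cite{campbell1965coding} and only remarked that the finite-alphabet argument extends to countably infinite alphabets---so there is no in-paper proof to compare against.
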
 
We note that while Campbell's original proof was for $X$ with finite alphabet, it can easily be extended to the case of a countably infinite alphabet. For simplicity, we will often refer to $L(t)$ as just the Campbell cost. $L(t)$ is a nondecreasing function of $t$, and is strictly increasing in $t$ unless all the lengths $n_x$ are equal. As a result, if $X$ has an infinite support, then $L(t) \to \infty$ as $t \to \infty$. If $X$ is finite-valued, then $L(t) \to \max_{x \in \mathcal{X}} n_x$ as $t \to \infty$. 

R\'enyi also defined the divergence of order $\alpha$ between probability distributions $P$ and $Q$, $D_\alpha(P||Q)$, as
\begin{equation}
    D_\alpha(P||Q) =
        \frac{1}{\alpha - 1} \log(\mathbb{E}_{X \sim Q} \left[\left(\frac{\dd P}{\dd Q}(X) \right)^\alpha\right]), \nonumber
\end{equation}
 where $\alpha \in (0, 1) \cup (1, \infty)$ and it is assumed that $P \ll Q$. We note that $\lim_{\alpha \nearrow 1} D_\alpha(P||Q) = D(P||Q)$, i.e., the Kullback-Leibler divergence is recovered when $\alpha \nearrow 1$~\cite{van2014renyi}. There are several R\'enyi generalizations of the Shannon mutual information~\cite{verdu2015alpha}; in our problem of channel simulation, Sibson's definition~\cite{sibson1969information} naturally emerges. For arbitrary random variables $X$ and $Y$ on Polish spaces $\mathcal{X}$ and $\mathcal{Y}$, with joint distribution $\text{P}_{XY}$ and marginals $\text{P}_X$ and $\text{P}_Y$, we define the Sibson $\alpha$-mutual information for any $\alpha \in (0, 1) \cup (0,\infty)$ as
 \begin{equation}
     I_\alpha^{\text{\scriptsize s}}(X;Y) = \inf_{Q_Y} D_\alpha(\text{P}_{XY} || \text{P}_X \times Q_Y). \label{eq:sibsondefinition}
 \end{equation}
 The infimum is achieved by the distribution $Q_{Y_\alpha}$ on $\mathcal{Y}$ defined by
 \begin{equation}
     \frac{\dd Q_{Y_\alpha}}{\dd \text{P}_Y}(y) = \frac{\mathbb{E}_X \left[ \left( \frac{\dd \text{P}_{XY}}{\dd \text{P}_X \text{P}_Y} (X, y) \right)^\alpha\right]^{\frac{1}{\alpha}}}{\mathbb{E}_Y \left[ \mathbb{E}_X \left[ \left( \frac{\dd \text{P}_{XY}}{\dd \text{P}_X \text{P}_Y} (X, Y) \right)^\alpha\right]^{\frac{1}{\alpha}}\right]}. \label{eq:sibsondistribution}
 \end{equation}
 As $\alpha \to 1$, $\lim_{\alpha \nearrow 1} I_\alpha^{\text{\scriptsize s}}(X; Y) = I(X;Y)$, recovering Shannon's mutual information.
 \subsection{Problem Setup}
Let $P$ and $Q$ be probability distributions on the Polish space $\mathcal{U}$ with $P \ll Q$. Suppose the sender and receiver have shared access to the i.i.d. sequence $\qty{U_i}_{i \geq 1}$ distributed according to $Q$; this sequence is referred to as \textit{common randomness}. The sender transmits an index $K$, encoded using the uniquely decodable binary variable-length code $\mathcal{C} : \mathbb{N} \to \qty{0, 1}^*$ with codeword lengths $l(\mathcal{C}(K))$, such that $U_K \sim P$. We refer to $Q$ as the \textit{proposal} distribution and $P$ as the \textit{target} distribution; it is always assumed that $P \ll Q$. This problem will henceforth be referred to as \textit{exact sampling} with proposal $Q$ and target~$P$. In this setup, we wish to minimize the Campbell cost, $L(t)$, of the message, $\mathcal{C}(K)$, over all protocols such that $U_K \sim P$. We also study the \textit{asymptotic} setting, where the proposal and target distributions are the product distributions $Q^{\otimes n}$ and $P^{\otimes n}$, respectively. For any $t > 0$, let $L^*_n(t)$ be the infimum of the set of achievable Campbell costs of order $t$ among all protocols for proposal $Q^{\otimes n}$ and target $P^{\otimes n}$. The goal is to characterize the optimal asymptotic rate $\limsup_{n \to \infty} L^*_n(t) / n$~\cite{li2024channel}.
\par
An important subclass of sampling algorithms is causal samplers~\cite{liu2018rejection}. A sampling algorithm is called causal if the index~$K$ is a stopping time, i.e., the event $\qty{K \geq k}$ is independent of $\qty{U_{k+1}, U_{k + 2}, \ldots}$. Intuitively, a causal sampler examines the shared randomness $\qty{U_i}_{i \geq 1}$ sequentially. Any sampler which is not causal is called noncausal. Examples of causal samplers include greedy rejection sampling~\cite{harsha2010communication,flamich2023adaptive} and greedy Poisson rejection sampling~\cite{flamich2023greedy}, while the Poisson functional representation~\cite{sfrl} is a noncausal sampler. 
\subsection{Prior Work}
The problem of channel simulation was first studied by Wyner~\cite{wyner1975common}, who considered the asymptotic version without common randomness but allowing error between the simulated and target distributions. Winter~\cite{winter2002compression} showed that, by allowing a shared source of unlimited randomness, it is possible to simulate the channel with small error at an asymptotic cost equal to the mutual information. Bennett et al.~\cite{bennett2002entanglement} showed that the same result holds even for exact channel simulation. More recently, Cuff~\cite{cuff2013distributed}, Bennett et al.~\cite{bennett2014quantum}, and Yu and Tan~\cite{yu2019exact} studied the tradeoff between communication rate and amount of common randomness in the asymptotic regime. Sriramu and Wagner~\cite{sriramu2024optimal} and Flamich et al.~\cite{flamich2025redundancy} examined the optimal rate of exact channel synthesis in the asymptotic case. There has also been recent work, motivated by applications in deep learning-based compression, on developing algorithms which are computationally efficient, e.g.~\cite{theis2022algorithms, flamich2022fast, flamich2023faster, sriramu2024fast, flamich2023greedy, flamich2023adaptive
}. We refer the reader to~\cite{li2024channel} for a comprehensive survey of channel simulation methods and results. \par 
In the problem of exact sampling, it is not hard to show that a lower bound on the expected message length is $\mathbb{E}[l(\mathcal{C}(K))] \geq D(P||Q)$~\cite{harsha2010communication}. The first achievability bound is due to Harsha et al.~\cite{harsha2010communication}, who described a communication protocol for discrete $P,Q$ for which 
\begin{equation}
    \mathbb{E}[l(\mathcal{C}(K))] \leq D(P||Q) + (1 + \epsilon)\log(D(P||Q) + 1) + c_\epsilon, \nonumber
\end{equation}
with $c_\epsilon$ a constant depending on $\epsilon$. Their proof effectively amounts to showing that $\mathbb{E}[\log K] \leq D(P||Q) + 2\log e$ and encoding $K$ using a universal code. As noted in~\cite{sfrl}, by instead using a power law code one can bound $\mathbb{E}[l(\mathcal{C}(K))] \leq D(P||Q) + \log (D(P||Q) + 1) + 7.78$ for discrete $P, Q$. The Poisson functional representation, first proposed by Li and El~Gamal in~\cite{sfrl}, has since been used to prove a tighter upper bound on $\mathbb{E}[l(\mathcal{C}(K))]$~\cite{li2024pointwise}. 
\begin{definition}[Poisson functional representation~\cite{li2021unified}] \label{def:pfr} Let $P$ and $Q$ be the target and proposal distributions, respectively, of an exact sampling problem. Let $\{T_i\}_{i \geq 1}$ be a rate-one Poisson process and let $\{U_i\}_{i \geq 1}$ be an i.i.d. sequence distributed according to $Q$ and independent of $\qty{T_i}_{i \geq 1}$. Define
\begin{equation}
    K \coloneqq \underset{i \geq 1}{\arg\min} \frac{T_i}{\frac{\dd P}{\dd Q}(U_i)}. \label{eq:K}
\end{equation}
\end{definition}
It is shown in~\cite[Appendix A]{li2021unified} that under this definition, $U_K \sim P$. For $P$ and $Q$ general distributions and $K$ generated according to Definition~\ref{def:pfr},~\cite{li2024pointwise} showed that
\begin{equation}
    \mathbb{E}[l(\mathcal{C}(K))] \leq D(P||Q) + \log(D(P||Q) + 2) + 3. \nonumber
\end{equation}
Moreover,~\cite{sfrl} showed that there exist distributions for which $\mathbb{E}[l(\mathcal{C}(K))] \geq D(P||Q) + \log(D(P||Q) + 1) - 1$, i.e., the $\log$ term in the upper bound cannot be removed in general. Recent improvements on these bounds are given in~\cite{goc2024channel, flamich2025redundancy}, where the channel simulation divergence $D_{CS}(P||Q)$ (see~\cite{goc2024channel} for the definition) is used to show
\begin{equation}
    D(P||Q) \leq D_{CS}(P||Q) \leq \mathbb{E}[l(\mathcal{C}(K))] \leq D_{CS}(P||Q) + \log(e + 1) + 1.
\end{equation}

\subsection{Contributions}
In this work, we study the Campbell cost, $L(t)$, of exact sampling algorithms. Although exact sampling and its associated minimum communication cost are well-studied problems, to the best of our knowledge, this natural extension (from linear cost to an exponential cost in message lengths) has not been investigated in the literature. Our main contributions can be summarized as follows:
\begin{itemize}
    \item We provide a lower bound on the $L(t)$ cost of encoding $K$ using any (possibly noncausal) sampling algorithm that grows approximately as $D_{\frac{1}{\alpha}}(P||Q)$, where $\alpha = \frac{1}{1 + t}$. We also prove the lower bound $L(t) \geq D_{2-\alpha}(P||Q) + \frac{1}{1-\alpha} \log(\frac{1}{2-\alpha})$, which is a tighter bound when $\alpha$ is close to $1$ or $D_\alpha(P||Q)$ is small.
    \item We use the Poisson functional representation to prove the upper bound $L(t) \leq (1 + \epsilon) D_{\frac{1 + \epsilon(1-\alpha)}{\alpha}}(P||Q) + c_1(\alpha, \epsilon)$, for any $0 < \alpha < 1$ and $\epsilon > 0$, where $c_1(\alpha, \epsilon)$ is a constant depending on $\alpha$ and $\epsilon$. Using numerical examples, we demonstrate that this upper bound is within 5-10 bits of the $D_{\frac{1}{\alpha}}(P||Q)$ lower bound, even for distributions where $L(t)$ is large. 
    \item We use our results to upper bound the R\'enyi entropy of channel simulation by the Sibson $\alpha$-mutual information~\cite{sibson1969information}, establishing a R\'enyi-generalized version of the strong functional representation lemma~\cite{sfrl}.
    \item We describe an operational procedure for encoding $K$ using a universal code that gives $L(t) \leq D_{\frac{2-\alpha}{\alpha}}(P||Q) + (1 + \epsilon) \log (D(P||Q) + 1) + c_2(\epsilon)$, for any $2/3 < \alpha < 1$ and $0 < \epsilon \leq \frac{3\alpha - 2}{2 - 2\alpha}$, where $c_2(\epsilon)$ is a constant. We show that this upper bound reduces to the bound of Harsha et al.~\cite{harsha2010communication} as $\alpha \to 1$.
    \item We characterize the minimum asymptotic Campbell cost of exact sampling between product distributions of dimension $n$, showing that $\lim_{n \to \infty} L^*_n(t) / n = D_{1/\alpha}(P||Q)$, where $\alpha = \frac{1}{1+t}$. 
    \item We show that under mild assumptions, a causal sampler has $\liminf_{n \to \infty} L^*_n / n \geq D_\beta(P||Q)$, where $\beta > \frac{1}{\alpha}$, and thus does strictly worse than a noncausal sampler in the asymptotic Campbell cost. 
\end{itemize}
Our main technical tool is to construct an encoding of $K$ that allows us to invoke the upper and lower bounds on the moments of $K$ in~\cite{liu2018rejection, li2021unified}. This requires, among other technical details, nontrivially extending the lower bound of~\cite[Thm. 7]{liu2018rejection} to any bijection of $K$ and the moment upper bound~\cite[Prop. 4]{li2021unified} beyond concave functions of $K$.
\section{Lower Bound on the Campbell Cost}
\label{sec:lowerbounds}
We present our main results as bounds on the Campbell cost $L(t)$; but, using Proposition~\ref{prop:campbellcost}, we can write them as bounds on $H_\alpha(K)$ within one bit (this will be formalized in Corollary~\ref{cor:renyientropybound}). In this section, we will prove a lower bound on the Campbell cost of any sampling algorithm. Note that for all $0 < \alpha < 1$ and with $t = \frac{1-\alpha}{\alpha}$, one has the trivial lower bound $L(t)\nolinebreak\geq\nolinebreak H_\alpha(K)\geq\nolinebreak H(K) \nolinebreak\geq\nolinebreak D(P||Q)$, where the final inequality follows similar to~\cite[Prop. IV.3]{harsha2010communication}. However, such a lower bound loses the dependency on $\alpha$, which is key to understanding $H_\alpha(K)$ and $L(t)$.
\begin{theorem} \label{thm:lowerbound}
    Let $K$ be the output of any exact sampling algorithm with proposal distribution $Q$, target distribution $P$, and common randomness $\qty{U_i}_{i \geq 1}$. Suppose $K$ is encoded using a uniquely decodable binary code $\mathcal{C}$. Then, for any $0 < \alpha < 1$ with $t = \frac{1-\alpha}{\alpha}$,
    \begin{equation}
       L(t) \geq \max \qty{\text{LB}_1^\alpha, \text{LB}_2^\alpha}, \label{eq:thmlowerbound}
    \end{equation}
    with $\text{LB}_1^\alpha \coloneqq D_{\frac{1}{\alpha}}(P||Q) + \frac{\alpha}{1-\alpha} \log(\alpha) - 1$ and $\text{LB}_2^\alpha \coloneqq D_{2-\alpha}(P||Q) + \frac{1}{1-\alpha}\log(\frac{1}{2-\alpha})$.
\end{theorem}
Theorem~\ref{thm:lowerbound} lower bounds the Campbell cost for any exact sampling problem. The constant term in $\text{LB}_1^\alpha$ ranges between $-1$ (at $\alpha = 0$) and $\frac{-1}{\ln 2} - 1$ (at $\alpha = 1$) while the constant term in $\text{LB}_2^\alpha$ ranges between $-1$ (at $\alpha = 0$) and $\frac{-1}{\ln 2}$ (at $\alpha = 1$). Thus, as $\alpha \rightarrow 1$ in~\eqref{eq:thmlowerbound} we recover the bound $\mathbb{E}[n_K] \geq D(P||Q) - \frac{1}{\ln(2)}$, the lower bound in~\cite{harsha2010communication} with a small penalty term of $\frac{-1}{\ln(2)}$. Except for $P$ and $Q$ close together (in the sense of R\'enyi divergence) or $\alpha$ close to 1, in~\eqref{eq:thmlowerbound}, the dominant term will typically be $D_{\frac{1}{\alpha}}(P||Q)$ and $\text{LB}_1^\alpha$ will yield the tighter bound. We include $\text{LB}_2^\alpha$ only to improve the lower bound for such situations. Numerical examples comparing $\text{LB}_1^\alpha$ and $\text{LB}_2^\alpha$, shown in Section~\ref{sec:simulations}, support this conclusion. 
Before proving Theorem~\ref{thm:lowerbound}, we state a lemma to be used in the proof.
\begin{lemma} \label{lemma:alphamomentlowerbound}
    Let $K$ be the output of any sampling algorithm with proposal distribution $Q$, target distribution $P$, and common randomness $\qty{U_i}_{i \geq 1}$. Then, for any $0 < \alpha < 1$ and bijection $g : \mathbb{N} \to \mathbb{N}$, 
    \begin{equation}
        \mathbb{E}[(g(K))^\alpha] \geq  \frac{1}{1 + \alpha} 2^{\alpha D_{\alpha + 1} (P||Q)}. \label{eq:alphamomentglowerbound}
    \end{equation}
\end{lemma}
\begin{proof}
    The lower bound~\eqref{eq:alphamomentglowerbound} was originally proved in~\cite{liu2018rejection} for countable $\mathcal{U}$ and $g(k) = k$, we extend the argument to Polish spaces and all bijections $g : \mathbb{N} \to \mathbb{N}$. The proof can be found in Appendix~\ref{app:kmomentproof}.
\end{proof}
\begin{proof}[Proof of Theorem~\ref{thm:lowerbound}]
    We will separately show that $L(t) \geq \text{LB}_1^\alpha$ and $L(t) \geq \text{LB}_2^\alpha$. We first prove that $\text{LB}_2^\alpha \leq H_\alpha(K)$; the desired inequality ($\text{LB}_2^\alpha \leq L(t)$) is immediate by the bound $H_\alpha(K) \leq L(t)$ from Proposition~\ref{prop:campbellcost}. Define $g : \mathbb{N} \rightarrow \mathbb{N}$ to be a bijection which maps $k$ to its index in $\qty{\mathbb{P}(K=k)}_{k \geq 1}$ sorted in nonincreasing order (with ties broken arbitrarily), so that with $\tilde{K} \coloneqq g(K)$ and $\tilde{p}(k) \coloneqq \mathbb{P}(\tilde{K} = k)$ one has $\tilde{p}(k) \geq \tilde{p}(k+1)$ for all $k \in \mathbb{N}$. Note that $H_\alpha(K) = H_\alpha(\tilde{K})$. For any $n \in \mathbb{N}$, 
    \begin{equation}
        1 = \sum_{k=1}^\infty \tilde{p}(k) \geq \sum_{k=1}^n \tilde{p}(k) \geq \sum_{k=1}^n \tilde{p}(n) = n\tilde{p}(n), \label{eq:knonincreasing}
    \end{equation}
    implying $\tilde{p}(n) \leq \frac{1}{n}$ for all $n$. Then, because $1-\alpha > 0$,
    \begin{align}
        \sum_{k=1}^\infty \tilde{p}(k)^\alpha &= \sum_{k=1}^\infty \tilde{p}(k) \left( \frac{1}{\tilde{p}(k)} \right)^{1-\alpha} \nonumber \\
        &\geq \sum_{k=1}^\infty \tilde{p}(k) k^{1-\alpha} = \mathbb{E}[\tilde{K}^{1-\alpha}]. \label{eq:1minusalphamoment}
    \end{align}
    Taking logarithms and dividing by $1 - \alpha$ gives
    \begin{align}
        H_\alpha(K) = H_\alpha(\tilde{K}) &\geq \frac{1}{1-\alpha}\log(\mathbb{E}[\tilde{K}^{1-\alpha}]) \nonumber  \\
        &\geq \frac{1}{1-\alpha} \log (\frac{1}{2-\alpha} 2^{(1-\alpha)D_{2-\alpha}(P||Q)} ) \label{eq:usedmomentlower} \\
        &= D_{2-\alpha}(P||Q) + \frac{1}{1-\alpha}\log(\frac{1}{2-\alpha}), \nonumber 
    \end{align}
    where~\eqref{eq:usedmomentlower} follows from Lemma~\ref{lemma:alphamomentlowerbound}. 
    
    We now prove that $L(t) \geq \text{LB}_1^\alpha$ by considering an encoding of $K$ which is one-to-one and not necessarily uniquely decodable. Here, a one-to-one code is an injective function $f : \mathbb{N} \to \qty{0,1}^*$. Consider a bijection $g:\mathbb{N} \rightarrow \mathbb{N}$ such that $\tilde{K} \coloneqq g(K)$ has probability distribution $\tilde{p}(k) \coloneqq \mathbb{P}(\tilde{K} = k)$ which is nonincreasing in $k$. The minimum Campbell costs of encoding $K$ and $\tilde{K}$ are the same, and the same is true for any one-to-one encoding of $\tilde{K}$. Therefore, because uniquely decodable codes are contained in the class of all one-to-one codes, showing that $L(t) \geq D_{\frac{1}{\alpha}}(P||Q) + \frac{\alpha}{1-\alpha} \log(\alpha) - 1 = \text{LB}_1^\alpha$ for any one-to-one encoding of $\tilde{K}$ will prove the theorem. Since the probability distribution of $\tilde{K}$ is nonincreasing in $k$; therefore, by the rearrangement inequality (see, e.g.,~\cite{steele2004cauchy}) the optimal one-to-one code $f^*$ under the Campbell cost must have nondecreasing codeword lengths, i.e., $l(f^*(k)) \leq l(f^*(k+1))$ for all $k$. Therefore, 
    an $f^*$ that maps $k$ to the $k$th element of $\qty{0,1}^*$, i.e., $f^*(1) = 0$, $f^*(2) = 1$, etc., is optimal. This optimal $f^*$ has codeword lengths $l(f^*(k)) = \lfloor\log(k+1) \rfloor$, and hence
\begin{align}
    L(t) = \frac{1}{t} \log(\mathbb{E}\left[ 2^{t l(f(\tilde{K}))} \right] ) &\geq \frac{1}{t} \log( \mathbb{E}\left[ 2^{tl(f^*(\tilde{K}))}\right]) \nonumber \\
     &= \frac{1}{t} \log (\sum_{k=1}^\infty \tilde{p}(k) 2^{t \lfloor \log (k+1) \rfloor} ) \nonumber \\
     &\geq \frac{1}{t} \log (\sum_{k=1}^\infty \tilde{p}(k) 2^{t(\log k -1)} ) \nonumber \\
     &= \frac{1}{t} \log (\mathbb{E}[\tilde{K}^t]) - 1 \nonumber \\
     &\geq \frac{1}{t}\log(\frac{1}{1 + t} 2^{tD_{t + 1}(P||Q)}) - 1 \label{eq:tmomentk} \\
     &= D_{\frac{1}{\alpha}}(P||Q) + \frac{\alpha}{1-\alpha}\log(\alpha) - 1, \nonumber
 \end{align}
 where~\eqref{eq:tmomentk} follows from Lemma~\ref{lemma:alphamomentlowerbound}.
\end{proof}
\section{Upper Bounds using the Poisson Functional Representation} 
\label{sec:upperbounds}
Theorem~\ref{thm:lowerbound} tells us that the minimum Campbell cost of any sampling algorithm grows at least as $D_{\frac{1}{\alpha}}(P||Q)$. The natural question is whether any one-shot sampling algorithms exist that achieve this lower bound. Suppose that we could bound $L(t) \leq D_{\frac{1}{\alpha}}(P||Q) + C$ for some constant $C$ and all distributions $P, Q$. Then, taking $\alpha \to 1$ (equivalently $t \to 0$) would yield $\mathbb{E}[l(\mathcal{C}(K))] \leq D(P||Q) + C$, a bound shown to be invalid by an explicit counter-example~\cite{sfrl}. Therefore, it is reasonable to conjecture that a tight upper bound in the form $L(t) \leq D_{\frac{1}{\alpha}}(P||Q) + c$, which is valid for all $P$ and $Q$, does not exist for the Campbell cost, either. Instead, we have the following upper bound on $L(t)$ using the Poisson functional representation.
\begin{theorem} \label{thm:upperbound}
    Let $K$ be the output of the Poisson functional representation given proposal distribution $Q$ and target distribution $P$. Then, for any $0 < \alpha < 1$ and $\epsilon > 0$, there exists a uniquely decodable encoding of $K$ such that
    \begin{equation}
        L(t) \leq (1 + \epsilon) D_{\frac{1 + \epsilon(1-\alpha)}{\alpha}}(P||Q) + c_1(\alpha, \epsilon), \label{eq:thmupperbound}
    \end{equation}
    with $t = \frac{1-\alpha}{\alpha}$ and $c_1(\alpha, \epsilon)$ a constant term defined as
    \begin{align}
        c_1(\alpha, \epsilon) &\coloneqq \begin{cases}
            (1 + \epsilon)\log e + 1 + \log(1 + \frac{1}{\epsilon}), &\frac{1}{2} < \alpha < 1 \text{ and } 0 < \epsilon < \frac{2\alpha - 1}{1-\alpha}\\
            \frac{\alpha}{1-\alpha} \log( \Gamma \left( \frac{1 + \epsilon(1-\alpha)}{\alpha} \right) ) + 4+3\epsilon - \frac{2\alpha}{1-\alpha}+ \log (1 + \frac{1}{\epsilon}), &0 < \alpha < \frac{1}{2} \text{ or } \epsilon \geq \frac{2\alpha - 1}{1-\alpha}.
        \end{cases} \label{eq:constanttermdef}
    \end{align}
\end{theorem}
We will refer to the upper bound in~\eqref{eq:thmupperbound} as $\text{UB}_1^\alpha \coloneq (1 + \epsilon) D_{\frac{1 + \epsilon(1-\alpha)}{\alpha}}(P||Q) + c_1(\alpha, \epsilon)$. We note that $c_1(\alpha, \epsilon)$ is finite for all $0 < \alpha < 1$, but tends to $\infty$ as $\alpha \to 0$. We can upper bound $c_1(\alpha, \epsilon)$ by a simpler function which more clearly shows the dependence on $\alpha$ and $\epsilon$.
\begin{proposition} \label{prop:simpleconstant}
    Let $c_1(\alpha, \epsilon)$ be defined as in~\eqref{eq:constanttermdef}. Then, for any $\epsilon > 0$ and $0 < \alpha < 1$,
    \begin{equation}
        c_1(\alpha, \epsilon) < \left( \frac{1}{2(1-\alpha)} + \frac{1}{2} + \epsilon \right) \log(\frac{1}{\alpha}) + \log(\frac{1}{\epsilon}) + 1.5\epsilon^2 + 4.5 \epsilon + 2.6 \eqqcolon \tilde{c}_1(\alpha, \epsilon).
    \end{equation}
\end{proposition}
Fig.~\ref{fig:constantcomparison} shows the value of $c_1(\alpha, \epsilon)$ and the looser bound in Proposition 2, minimized over $\epsilon > 0$ for each $0.2 < \alpha < 1$. Numerically, we see that $\displaystyle{\min_{\epsilon > 0} \tilde{c}_1(\alpha, \epsilon)}$ of Proposition~\ref{prop:simpleconstant} is approximately 2-3 bits looser than $\displaystyle{\min_{\epsilon > 0} c_1(\alpha, \epsilon)}$.
\begin{figure*}[ht]
    \centering
    \includegraphics{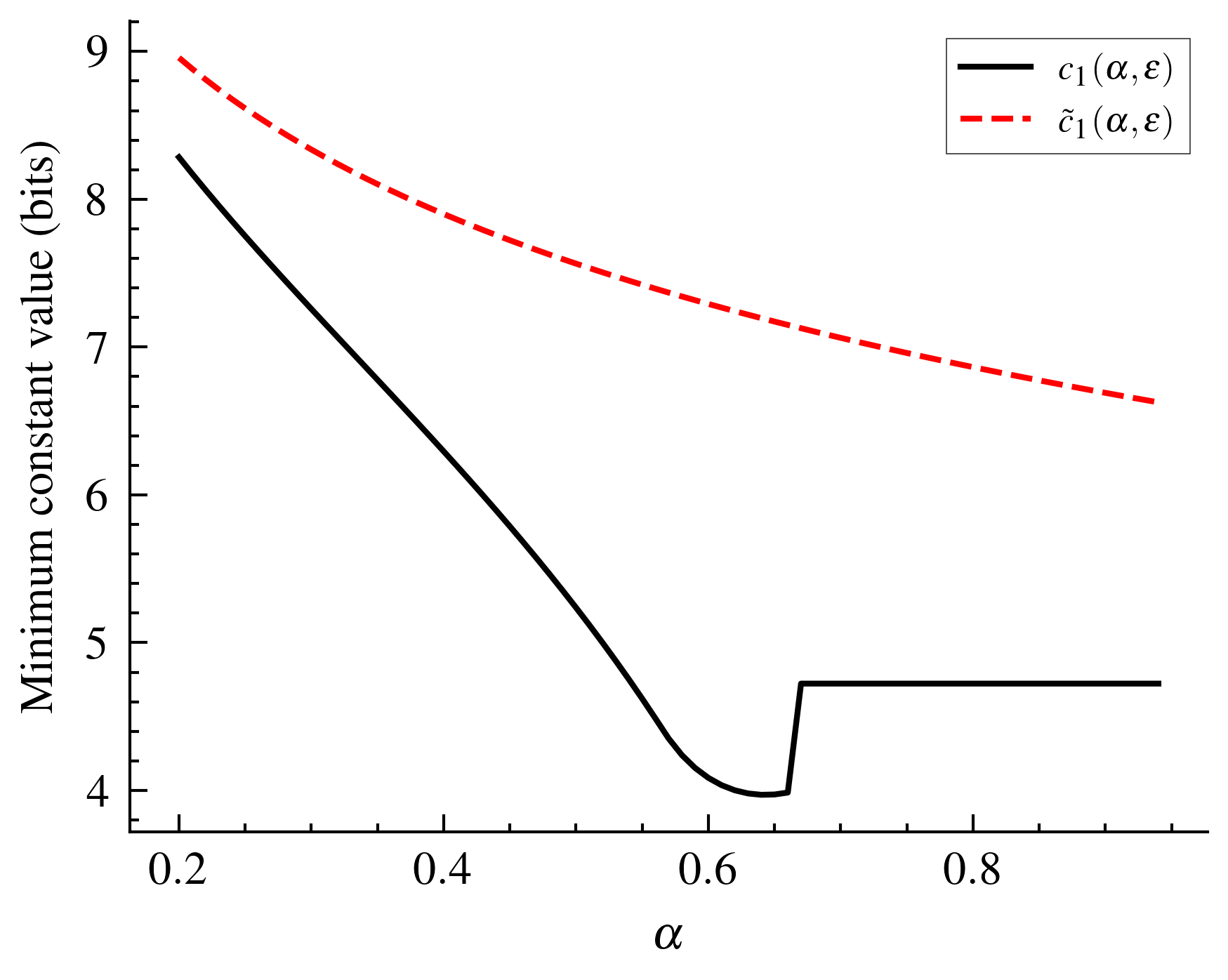}
    \caption{$\displaystyle{\min_{\epsilon > 0}} \; c_1(\alpha, \epsilon)$ and $\displaystyle{\min_{\epsilon > 0} \tilde{c}_1(\alpha, \epsilon)}$, as $\alpha$ varies between 0.2 and 1.}
    \label{fig:constantcomparison}
\end{figure*}

Before proving Theorem~\ref{thm:upperbound}, we state two lemmas which will be instrumental in upper-bounding the moments of $K$.
\begin{lemma} 
\label{lemma:kbounds}
    Let $K$ be the output of the Poisson functional representation given proposal distribution $Q$ and target distribution $P$. Then, 
    \begin{equation}
        \mathbb{E}\left[\log K \right] \leq D(P||Q) + 1, \label{eq:logbound}
    \end{equation}
    and for $0 < \alpha < 1$, 
    \begin{equation}
        \mathbb{E}\left[K^\alpha \right] \leq 2^{\alpha D_{\alpha + 1}(P||Q)} + \alpha. \label{eq:alphamomentbound}
    \end{equation}
\end{lemma}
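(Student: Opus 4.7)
The plan is to exploit the Poisson-process structure of the PFR to reduce both bounds to expectations of simple functions of $r(U_K)$, where $r = \dd P/\dd Q$. Under the pushforward map $(u, t) \mapsto (u, t/r(u))$, the Poisson process $\{(U_i, T_i)\}$ of intensity $\dd Q \otimes \dd t$ becomes a Poisson process $\{(U_i, W_i)\}$ on $\mathcal{U} \times [0, \infty)$ of intensity $\dd P \otimes \dd w$, with $W_i = T_i/r(U_i)$. Since this new intensity is a product, the first arrival in $w$-order has independent coordinates $U_K \sim P$ and $W_K \sim \mathrm{Exp}(1)$. Conditioning on $(U_K, W_K) = (u, w)$, the Palm property makes the other points of $\{(U_i, T_i)\}$ an independent Poisson process of intensity $\mathbb{1}\{t > w r(u')\} \dd Q(u') \dd t$, and the number of such points with $t < T_K = wr(u)$ is Poisson with mean
\begin{equation*}
    \mu(u, w) = w \int (r(u) - r(u'))^+ \dd Q(u') \leq w r(u).
\end{equation*}

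Writing $K = 1 + X$ with $X \mid (U_K, W_K) \sim \mathrm{Poisson}(\mu)$, conditional Jensen applied to the concave functions $(1 + \cdot)^\alpha$ and $\log(1 + \cdot)$ gives $\mathbb{E}[K^\alpha \mid U_K, W_K] \leq (1 + \mu)^\alpha \leq (1 + W_K r(U_K))^\alpha$ and, analogously, $\mathbb{E}[\log K \mid U_K, W_K] \leq \log(1 + W_K r(U_K))$. A second Jensen step over $W_K$ (using $\mathbb{E}[W_K] = 1$ and concavity in $W_K$) then removes $W_K$, reducing the task to bounding $\mathbb{E}_{U \sim P}[(1 + r(U))^\alpha] = \mathbb{E}_{U \sim Q}[r(U)(1 + r(U))^\alpha]$ and $\mathbb{E}_{U \sim P}[\log(1 + r(U))] = \mathbb{E}_{U \sim Q}[r(U) \log(1 + r(U))]$.

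For the $K^\alpha$ bound, I would use the concavity inequality $(1 + 1/r)^\alpha \leq 1 + \alpha/r$ to split $r(1 + r)^\alpha = r^{1 + \alpha}(1 + 1/r)^\alpha \leq r^{1 + \alpha} + \alpha r^\alpha$ and integrate against $Q$:
\begin{equation*}
    \mathbb{E}_Q[r(1 + r)^\alpha] \leq 2^{\alpha D_{\alpha + 1}(P||Q)} + \alpha \cdot 2^{(\alpha - 1) D_\alpha(P||Q)} \leq 2^{\alpha D_{\alpha + 1}(P||Q)} + \alpha,
\end{equation*}
using $\alpha - 1 < 0$ and $D_\alpha(P||Q) \geq 0$. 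For the logarithmic bound, decompose $r \log(1 + r) = r \log r + r \log(1 + 1/r)$; the first summand integrates to $D(P||Q)$, while a direct second-derivative computation gives $f''(r) = -(\log e)/(r(r + 1)^2) < 0$, so $f(r) \coloneqq r \log(1 + 1/r)$ is concave on $(0, \infty)$. Jensen for $r = r(U)$ with $U \sim Q$ and $\mathbb{E}_Q[r] = 1$ then yields $\mathbb{E}_Q[f(r)] \leq f(1) = \log 2 = 1$, giving $\mathbb{E}[\log K] \leq D(P||Q) + 1$.

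The main technical point is identifying these two Jensen-type estimates that tighten the constants. The pointwise bound $(1 + 1/r)^\alpha \leq 1 + \alpha/r$, combined with the sign of $D_\alpha$, is what produces the constant $+\alpha$ rather than the $+1$ coming from the cruder $(1 + x)^\alpha \leq 1 + x^\alpha$; and recognizing that $r \log(1 + 1/r)$ is concave so that Jensen over $Q$ (with $\mathbb{E}_Q[r] = 1$) pins its $Q$-expectation to $\log 2 = 1$ is what improves the logarithmic constant from $\log e$ (as obtained by $r \log(1 + 1/r) \leq \log e$ pointwise) down to $1$.
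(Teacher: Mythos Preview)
The paper does not actually prove this lemma; immediately after the statement it writes that \eqref{eq:logbound} ``was proved in~\cite{li2024pointwise}'' and that \eqref{eq:alphamomentbound} ``follows from~\cite[Prop.~4]{li2021unified} after substituting $j=1$.'' So there is no in-paper argument to compare against.

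Your self-contained derivation is correct. Conditioning on $(U_K,W_K)=(u,w)$ via Slivnyak--Mecke and counting original points with $T_i<T_K$ does give $K-1\mid(U_K,W_K)\sim\mathrm{Poisson}(\mu)$ with $\mu(u,w)=w\int (r(u)-r(u'))^+\,\dd Q(u')\le w\,r(u)$; the two Jensen steps (over the Poisson count and then over $W_K\sim\mathrm{Exp}(1)$) are legitimate because $(1+\cdot)^\alpha$ and $\log(1+\cdot)$ are concave; and your final estimates $r(1+r)^\alpha\le r^{1+\alpha}+\alpha r^\alpha$ and $\mathbb{E}_Q[r\log(1+1/r)]\le \log 2=1$ (via concavity of $r\mapsto r\log(1+1/r)$ and $\mathbb{E}_Q[r]=1$) are exactly what is needed to land on the stated constants. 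It is worth noting that if you integrate out $W_K\sim\mathrm{Exp}(1)$ from your conditional Poisson law you recover $K\mid U_K=u\sim\mathrm{Geo}\big(1/(1+c(u))\big)$ with $c(u)=\int(r(u)-r(u'))^+\dd Q(u')$, and since $1+c(u)=\mathbb{E}_{U\sim Q}[\max\{r(u),r(U)\}]$ this is precisely the formula \eqref{eq:distributionofK} that the paper later quotes from~\cite{li2021unified}. In that sense your argument is a faithful reconstruction of the cited results, with the added value of isolating the two concavity/Jensen tricks that pin down the constants $+\alpha$ and $+1$.
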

Equation~\eqref{eq:logbound} was proved in~\cite{li2024pointwise}, and the upper bound in~\eqref{eq:alphamomentbound} follows from~\cite[Prop. 4]{li2021unified} after substituting $j=1$. We note that for $0 < \alpha < 1$ and $g(k) = k$, the Poisson functional representation in~\eqref{eq:alphamomentbound} almost achieves the lower bound on the $\alpha$-moment of any sampling algorithm in Lemma~\ref{lemma:alphamomentlowerbound}.
\begin{lemma}
    \label{lemma:geometricmoment}
    Let $X \sim \text{Geo}(p)$ be a geometric random variable with parameter $0 < p < 1$. Then, for any $r \geq 1$, 
    \begin{equation*}
        \mathbb{E}[X^r] \leq 2^{r-1} \left( \frac{\Gamma(r + 1)}{p^r} + 1 \right).
    \end{equation*}
\end{lemma}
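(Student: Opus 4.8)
The plan is to bound the $r$-th moment of a geometric random variable by comparing it to a gamma-distributed random variable, exploiting the fact that the continuous-time analogue of a geometric is an exponential. Recall that if $X \sim \text{Geo}(p)$ with $\mathbb{P}(X = k) = p(1-p)^{k-1}$ for $k \geq 1$, then $\mathbb{E}[X^r] = \sum_{k=1}^\infty k^r p (1-p)^{k-1}$. The key idea is that for $r \geq 1$, the function $x \mapsto x^r$ is convex, so we can relate the sum to an integral. First I would write $\mathbb{E}[X^r] = \sum_{k=1}^\infty k^r p(1-p)^{k-1}$ and observe that, using the inequality $k^r \leq 2^{r-1}((k-1)^r + 1)$ (which follows from convexity of $x^r$ applied to the midpoint, since $k^r = (((k-1) + 1))^r \le 2^{r-1}((k-1)^r + 1^r)$ for $r \ge 1$), we get
\begin{align*}
    \mathbb{E}[X^r] &\leq 2^{r-1} \sum_{k=1}^\infty \left((k-1)^r + 1\right) p(1-p)^{k-1} \\
    &= 2^{r-1}\left( \sum_{j=0}^\infty j^r p(1-p)^{j} + 1 \right) = 2^{r-1}\left( \mathbb{E}[(X-1)^r \mathbbm{1}\{X \geq 1\}] \cdot \tfrac{1}{1} + 1\right),
\end{align*}
so it remains to bound $\sum_{j=0}^\infty j^r p(1-p)^j \leq \Gamma(r+1)/p^r$.

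For that remaining bound, the clean approach is to compare the discrete sum with the exponential distribution. Let $Y$ be exponential with rate $\lambda = -\ln(1-p)$, so that $\mathbb{P}(Y > k) = (1-p)^k$ and $\lfloor Y \rfloor + 1 \sim \text{Geo}(p)$; equivalently $\lfloor Y \rfloor$ has the law of $X - 1$. Then $\sum_{j=0}^\infty j^r p(1-p)^j = \mathbb{E}[\lfloor Y \rfloor^r] \leq \mathbb{E}[Y^r] = \Gamma(r+1)/\lambda^r$. Since $\lambda = -\ln(1-p) = \ln\frac{1}{1-p} \geq p$ for $0 < p < 1$ (because $\ln(1+x) \le x$ gives $-\ln(1-p) \ge p$... let me be careful: $\ln\frac{1}{1-p} \ge p \iff \frac{1}{1-p} \ge e^p \iff 1 \ge (1-p)e^p$, and $(1-p)e^p \le 1$ since the function $(1-p)e^p$ has derivative $-pe^p \le 0$), we have $\lambda^r \geq p^r$ and hence $\Gamma(r+1)/\lambda^r \leq \Gamma(r+1)/p^r$. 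Combining with the previous display gives $\mathbb{E}[X^r] \leq 2^{r-1}(\Gamma(r+1)/p^r + 1)$, as claimed.

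The main obstacle I anticipate is making the comparison $\mathbb{E}[\lfloor Y \rfloor^r] \leq \mathbb{E}[Y^r]$ rigorous and verifying the distributional identity $\lfloor Y \rfloor \stackrel{d}{=} X - 1$ precisely: one needs $\mathbb{P}(\lfloor Y \rfloor = j) = \mathbb{P}(j \le Y < j+1) = (1-p)^j - (1-p)^{j+1} = p(1-p)^j = \mathbb{P}(X - 1 = j)$, which is exactly the geometric pmf shifted by one. Given that, $\lfloor Y \rfloor \leq Y$ pointwise and $r \ge 1$ makes $x \mapsto x^r$ monotone on $[0,\infty)$, so the moment inequality is immediate. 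An alternative, avoiding the floor function altogether, is to bound the sum $\sum_{j=0}^\infty j^r p(1-p)^j$ directly against $\int_0^\infty x^r p(1-p)^x\,dx$ using that $j^r(1-p)^j$ is eventually decreasing and handling the initial increasing terms by a shift; this also yields $\Gamma(r+1)/\lambda^r$ up to the same constant, but the floor-coupling argument is cleaner. Either way, the only real inequalities used are convexity of $x^r$ and the elementary estimate $-\ln(1-p) \ge p$.
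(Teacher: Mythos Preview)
Your proof is correct and essentially the same as the paper's. Both couple $X$ to an exponential $Y$ with rate $\lambda = -\ln(1-p)$ (the paper via $X = \lceil Y \rceil \le Y + 1$, you via $X - 1 \stackrel{d}{=} \lfloor Y \rfloor \le Y$), use the convexity bound $(a+1)^r \le 2^{r-1}(a^r + 1)$, compute $\mathbb{E}[Y^r] = \Gamma(r+1)/\lambda^r$, and finish with $\lambda \ge p$; the only cosmetic difference is that you apply the convexity inequality to the discrete sum first and then pass to the exponential, whereas the paper does these two steps in the opposite order.
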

\begin{proof}
    The proof can be found in Appendix~\ref{app:lemmageometricproof}.
\end{proof}
\begin{proof}[Proof of Theorem~\ref{thm:upperbound}]
We will prove Theorem~\ref{thm:upperbound} in two cases. In both cases, we will relate the Campbell cost $L(t)$ to the $rth$ moment of $K$, with $r = \frac{1-\alpha}{\alpha} (1 + \epsilon)$. For $r < 1$ (equivalently $\epsilon < \frac{2\alpha - 1}{1-\alpha}$), we will bound the moments of $K$ using~\eqref{eq:alphamomentbound} in Lemma~\ref{lemma:kbounds}, while in the case $r \geq 1$ (equivalently $\epsilon \geq \frac{2\alpha - 1}{1-\alpha}$) we will use the bound on the moments of a geometric random variable in Lemma~\ref{lemma:geometricmoment}. In particular, we will show that for any $1/2 < \alpha < 1$ and $0 < \epsilon < \frac{2\alpha - 1}{1-\alpha}$, with $t = \frac{1-\alpha}{\alpha}$,
\begin{equation}
    L(t) \leq (1 + \epsilon) D_{\frac{1 + \epsilon(1-\alpha)}{\alpha}}(P||Q) + (1 + \epsilon) \log e + 1 + \log (1 + \frac{1}{\epsilon}). \label{eq:upperbound1}
\end{equation}
Then, for any $0 < \alpha < 1$, for $t = \frac{1-\alpha}{\alpha}$ and $\epsilon \geq \frac{2\alpha - 1}{1-\alpha}$ (so that $t(1+\epsilon) \geq 1$) we will show that 
\begin{equation}
    L(t) \leq (1 + \epsilon) D_{\frac{1 + \epsilon(1-\alpha)}{\alpha}}(P||Q) + \frac{\alpha}{1-\alpha} \log ( \Gamma \left( \frac{1 + \epsilon(1-\alpha)}{\alpha} \right) ) + 4+3\epsilon - \frac{2\alpha}{1-\alpha} + \log (1 + \frac{1}{\epsilon}). \label{eq:upperbound2}
\end{equation}
Taking~\eqref{eq:upperbound1} and~\eqref{eq:upperbound2} together yields the statement of Theorem~\ref{thm:upperbound}. 

Fix $1/2 < \alpha < 1$ and $0 < \epsilon < \frac{2\alpha-1}{1-\alpha}$. Then, by the Kraft inequality there exists a uniquely decodable encoding of $K$ with codeword lengths $n_k \leq (1 +  \epsilon)\log k + 1 + \log(1 + \frac{1}{\epsilon})$; this is shown in Appendix~\ref{app:kraftanalysis}. Let $t = \frac{1-\alpha}{\alpha}$. Then, $0 < t < 1$ and $0 < \epsilon < \frac{2\alpha - 1}{1-\alpha} = \frac{1-t}{t}$ imply that $t(1+\epsilon)< 1$, so we can apply the upper bound on $\mathbb{E}\left[K^\alpha \right]$ in~\eqref{eq:alphamomentbound} to get
\begin{align}
L(t) &= \frac{1}{t}\log(\sum_{k=1}^\infty p(k) 2^{tn_k}) \nonumber \\
&\leq \frac{1}{t}\log(\sum_{k=1}^\infty p(k) 2^{t (1 + \epsilon) \log k + t  + t\log  ( 1 + \frac{1}{\epsilon})}) \nonumber \\
&= \frac{1}{t}\log(\mathbb{E}[K^{(1+\epsilon)t}]) + 1 + \log(1 + \frac{1}{\epsilon}) \nonumber \\
&\leq \frac{1}{t}\log(2^{(1 + \epsilon)tD_{(1 + \epsilon)t + 1}(P||Q)} + (1 + \epsilon)t) + 1 + \log(1 + \frac{1}{\epsilon}) \label{eq:kalphaboundused} \\
&\leq (1 + \epsilon)D_{(1+\epsilon)t + 1}(P||Q) + (1 + \epsilon)\log e + 1 + \log(1 + \frac{1}{\epsilon}). \label{eq:logsplit3}
\end{align}
Here,~\eqref{eq:kalphaboundused} follows from~\eqref{eq:alphamomentbound} and~\eqref{eq:logsplit3} from $\log(x + a) \leq \log(x) + a\log e$ for all $x \geq 1$ and $a > 0$. Noting that $(1 + \epsilon)t + 1 = \frac{1 + \epsilon(1-\alpha)}{\alpha}$ gives~\eqref{eq:upperbound1}. 

Suppose now that $0 < \alpha < 1$ and $ \epsilon \geq \frac{2\alpha - 1}{1-\alpha}$, so that with $t = \frac{1-\alpha}{\alpha}$, $(1 + \epsilon) t \geq 1$. From~\cite[Eq. 29]{li2021unified} (after substituting $j=1$), for any $u \in \mathcal{U}$, 
\begin{equation}
    K \mid \qty{U_K = u} \sim \text{Geo}(\beta(u)), \label{eq:distributionofK}
\end{equation}
for $\beta(u) = \mathbb{E}_{U \sim Q}\left[ \max \qty{\frac{\dd P}{\dd Q}(u), \frac{\dd P}{\dd Q}(U)} \right]^{-1}$. By Lemma~\ref{lemma:geometricmoment}, for $r = (1 + \epsilon)t \geq 1$, 
\begin{align}
    \mathbb{E}[K^r \mid U_K = u] &\leq 2^{r-1} \left( \frac{\Gamma(r + 1)}{\beta(u)^r} + 1 \right) \nonumber \\
    &= 2^{r-1} \Gamma(r + 1) \mathbb{E}_{U \sim Q} \left[\max \qty{\frac{\dd P}{\dd Q}(u), \frac{\dd P}{\dd Q}(U)} \right]^r + 2^{r-1} \nonumber \\
    &\leq 2^{r-1} \Gamma(r + 1) \left( \frac{\dd P}{\dd Q}(u) + \mathbb{E}_{U \sim Q}\left[ \frac{\dd P}{\dd Q} (U) \right] \right)^r + 2^{r-1} \nonumber \\
    &= 2^{r-1} \Gamma(r + 1) \left( \frac{\dd P}{\dd Q}(u) + 1 \right)^r + 2^{r-1} \nonumber \\
    &\leq \Gamma(r + 1) 2^{2r - 2} \left( \left( \frac{\dd P}{\dd Q}(u) \right)^r + 1 \right) + 2^{r-1}, \label{eq:boundagain}
\end{align}
where~\eqref{eq:boundagain} follows again by $(x + 1)^r \leq 2^{r-1} (x^r + 1)$ for any $r \geq 1$ and $x > 0$. Taking expectation with respect to $U_K \sim P$, we get that
\begin{equation}
    \mathbb{E}[K^r] \leq \Gamma(r + 1) 2^{2r-2} \left( 2^{r D_{r + 1}(P||Q)} + 1 \right) + 2^{r-1}. \label{eq:lemma4result}
\end{equation}
Again encoding $K$ using a uniquely decodable code with codeword lengths $n_k \leq (1 + \epsilon) \log k + 1 + \log (1 + \frac{1}{\epsilon})$ and writing $r = (1 + \epsilon)t$ and $c(\epsilon) = 1 + \log ( 1 + \frac{1}{\epsilon})$, we have that
\begin{align}
    L(t) &\leq \frac{1}{t} \log(\mathbb{E}[K^r]) + c(\epsilon) \nonumber \\
    &\leq \frac{1}{t} \log( \Gamma(r + 1) 2^{2r-2} \left( 2^{r D_{r + 1}(P||Q)} + 1 \right) + 2^{r-1}) + c(\epsilon) \label{eq:lemma4upperbound} \\
    &\leq \frac{1}{t} \log(\Gamma(r + 1) 2^{2r-2} \left( 2^{r D_{r + 1}(P||Q)} + 1 \right)) + \frac{1}{t} \log(2^{r-1}) + c(\epsilon) \label{eq:login1} \\
    &= \frac{1}{t}\log(2^{r D_{r + 1}(P||Q)} + 1) + \frac{1}{t} \log( \Gamma(r + 1)) + \frac{2r-2}{t} + \frac{r-1}{t} + c(\epsilon)\nonumber  \\
    &\leq \frac{1}{t}\log(2^{r D_{r + 1}(P||Q)}) + \frac{1}{t} \log( \Gamma(r + 1)) + \frac{2r-2}{t} + \frac{1}{t} + \frac{r-1}{t} + c(\epsilon) \label{eq:login2} \\
    &= \frac{r}{t} D_{r + 1}(P||Q) +\frac{1}{t} \log( \Gamma(r + 1)) + \frac{2r-2}{t} +  \frac{1}{t} + \frac{r-1}{t} + c(\epsilon). \nonumber
\end{align}
Here,~\eqref{eq:lemma4upperbound} follows by~\eqref{eq:lemma4result}, while~\eqref{eq:login1} and~\eqref{eq:login2} both follow from the inequality $\log(x + 1) \leq \log (x) + 1$ for all $x \geq 1$. Substituting $r = (1+\epsilon)t$ and $t = \frac{1-\alpha}{\alpha}$ gives~\eqref{eq:upperbound2} and the statement of the theorem. 
\end{proof}
The R\'enyi divergence term in $\text{UB}_1^\alpha$ has order $\frac{1 + \epsilon(1 -\alpha)}{\alpha}$, which as $\epsilon \rightarrow 0$ goes to $\frac{1}{\alpha}$, the order of the R\'enyi divergence term in the (generally) tighter $\text{LB}_1^\alpha$. However, $c_1(\alpha, \epsilon) \rightarrow \infty$ as $\epsilon \rightarrow 0$, thus $\text{UB}_1^\alpha$ is not tight. Numerical examples in Section~\ref{sec:simulations} demonstrate that, after minimizing over $\epsilon > 0$, $\text{UB}_1^\alpha$ is typically within 5-10 bits of $\text{LB}_1^\alpha$. As $\alpha \rightarrow 1$, we obtain the upper bound on the expected message length $\mathbb{E}[l(\mathcal{C}(K))] \leq (1 + \epsilon) D(P||Q) + (1+\epsilon)\log e + 1 + \log(1 + \frac{1}{\epsilon})$, for any $\epsilon > 0$. This bound is generally looser than the bound from the strong functional representation lemma~\cite{sfrl}, but better than the upper bound $\mathbb{E}[l(\mathcal{C}(K))] \leq D(P||Q) + (1 + \epsilon)\log(D(P||Q) + 1) + c_\epsilon$ by Harsha et al.~\cite{harsha2010communication}. Encoding $K$ using the Elias omega code~\cite{elias2003universal}, as done in~\cite{harsha2010communication}, yields the following generalization of their result, which is strictly worse than $\text{UB}_1^\alpha$.
\begin{theorem} \label{thm:worseupperbound}
    Let $K$ be the output of the Poisson functional representation given proposal distribution $Q$ and target distribution~$P$. Then, for any $2/3 < \alpha < 1$ and $0 < \epsilon \leq \frac{3\alpha - 2}{2-2\alpha}$, encoding $K$ using a universal code with codeword lengths $n_k = \log k + (1+\epsilon)\log \log (k + 1) + O(1)$ gives
    \begin{equation}
        L(t) \leq D_{\frac{2-\alpha}{\alpha}}(P||Q) + (1 + \epsilon)\log(D(P||Q) + 1) + c_2(\epsilon), \label{eq:worseupperbound}
    \end{equation}
    with $t = \frac{1-\alpha}{\alpha}$ and $c_2(\epsilon) = 3 + \epsilon + \log(\frac{\ln(2)}{\epsilon} + \frac{3}{2})$. 
\end{theorem}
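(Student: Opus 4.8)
The starting point is the identity $L(t)=\frac{1}{t}\log\big(\mathbb{E}[2^{tn_K}]\big)$, where $n_k$ are the codeword lengths. My plan is to first make the universal code precise. Since $\sum_{k\ge1}k^{-1}(\log k+1)^{-1}$ diverges, I would work instead with a prefix code --- which exists by the Kraft inequality --- whose lengths satisfy $n_k\le \log k+(1+\epsilon)\log(\log k+1)+\log S+1$, where $S\coloneqq\sum_{k\ge1}k^{-1}(\log k+1)^{-(1+\epsilon)}$; this is the valid refinement of the displayed form $\log k+\log(\log k+1)+O(1)$, and it is precisely here that $\epsilon>0$ enters, since it is what makes the series converge. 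An integral comparison gives $S\le\frac32+\frac{\ln2}{\epsilon}$, accounting for the $\log(\frac{\ln2}{\epsilon}+\frac32)$ term in $c_2(\epsilon)$. With these lengths, $2^{tn_k}\le (2S)^{t}k^{t}(\log k+1)^{t(1+\epsilon)}$, so $\mathbb{E}[2^{tn_K}]\le(2S)^{t}\,\mathbb{E}\big[K^{t}(\log K+1)^{t(1+\epsilon)}\big]$.

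The one genuine idea is to decouple the two factors inside this expectation by Cauchy--Schwarz:
\[
\mathbb{E}\big[K^{t}(\log K+1)^{t(1+\epsilon)}\big]\le\sqrt{\mathbb{E}[K^{2t}]}\,\sqrt{\mathbb{E}\big[(\log K+1)^{2t(1+\epsilon)}\big]}.
\]
For the first factor, the hypothesis $2/3<\alpha<1$ gives $2t=\frac{2(1-\alpha)}{\alpha}\in(0,1)$, so Lemma~\ref{lemma:kbounds} (equation~\eqref{eq:alphamomentbound} with exponent $2t$) yields $\mathbb{E}[K^{2t}]\le 2^{2tD_{2t+1}(P||Q)}+2t\le 2^{2tD_{2t+1}(P||Q)}(1+2t)$, hence $\sqrt{\mathbb{E}[K^{2t}]}\le 2^{tD_{2t+1}(P||Q)}\sqrt{1+2t}$. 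For the second factor, the hypothesis $\epsilon\le\frac{3\alpha-2}{2-2\alpha}$ is exactly the inequality $2t(1+\epsilon)\le1$, so $x\mapsto x^{2t(1+\epsilon)}$ is concave on $[0,\infty)$; Jensen's inequality together with $\mathbb{E}[\log K]\le D(P||Q)+1$ from Lemma~\ref{lemma:kbounds} (equation~\eqref{eq:logbound}) gives $\mathbb{E}\big[(\log K+1)^{2t(1+\epsilon)}\big]\le(\mathbb{E}[\log K]+1)^{2t(1+\epsilon)}\le(D(P||Q)+2)^{2t(1+\epsilon)}$, so the second square root is at most $(D(P||Q)+2)^{t(1+\epsilon)}$.

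Multiplying the pieces and applying $\frac1t\log(\cdot)$, I would use $2t+1=\frac{2-\alpha}{\alpha}$, the elementary bounds $\log(D(P||Q)+2)\le\log(D(P||Q)+1)+1$ and $\frac{1}{2t}\log(1+2t)\le\log e$, and $S\le\frac32+\frac{\ln2}{\epsilon}$, collecting every constant contribution into $c_2(\epsilon)=3+\epsilon+\log(\frac{\ln2}{\epsilon}+\frac32)$; this gives~\eqref{eq:worseupperbound}. For the closing remark: as $\alpha\to1$ we have $t\to0$, $\frac{2-\alpha}{\alpha}\to1$ so $D_{\frac{2-\alpha}{\alpha}}(P||Q)\to D(P||Q)$, and the constraint $\epsilon\le\frac{3\alpha-2}{2-2\alpha}\to\infty$ becomes vacuous, so the bound degenerates to $\mathbb{E}[l(\mathcal{C}(K))]\le D(P||Q)+(1+\epsilon)\log(D(P||Q)+1)+c_2(\epsilon)$, the form of Harsha et al.~\cite{harsha2010communication}. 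I expect the main obstacle to be purely bookkeeping: pinning down the Kraft constant $S$ (and checking that the factor $1+\epsilon$ on the iterated-log term is enough to keep the code valid), tracking the constants through the $\frac1t\log$ step, and confirming that the two exponent-doubling conditions forced by Cauchy--Schwarz, namely $2t<1$ and $2t(1+\epsilon)\le1$, coincide exactly with the stated hypotheses $\alpha>2/3$ and $\epsilon\le\frac{3\alpha-2}{2-2\alpha}$. The conceptual content is just the Cauchy--Schwarz split, which isolates the R\'enyi-divergence term $D_{2t+1}=D_{\frac{2-\alpha}{\alpha}}$ (controlled by Lemma~\ref{lemma:kbounds}) from the Kullback--Leibler term (controlled by Jensen and Lemma~\ref{lemma:kbounds}).
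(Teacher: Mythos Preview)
Your approach is essentially identical to the paper's: the same Cauchy--Schwarz split of $\mathbb{E}[K^{t}(\log K+1)^{t(1+\epsilon)}]$, the same use of Lemma~\ref{lemma:kbounds} (eq.~\eqref{eq:alphamomentbound} for the $K^{2t}$ factor and eq.~\eqref{eq:logbound} via Jensen for the iterated-log factor), and the same Kraft/integral estimate for the universal-code constant. The only differences are cosmetic---you use $\log(\log k+1)$ where the paper uses $\log\log(k+1)$, and your bound $\frac{1}{2t}\log(1+2t)\le\log e$ differs by a fraction of a bit from the paper's constant bookkeeping---so your final constant is within $O(1)$ of the stated $c_2(\epsilon)$ (the paper's own proof in fact records $c_2(\epsilon)=2+\epsilon+\log(\frac{\ln 2}{\epsilon}+\frac32)$ at the end, so some slack in the constant is already present).
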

\begin{proof}
    Let $2/3 < \alpha < 1$ and $0 < \epsilon \leq \frac{3\alpha - 2}{2-2\alpha}$. We will bound $L(t)$ using a universal coding of the natural numbers. In particular, using the prefix-free encoding described in~\cite[Ex. 1.11.16]{li2008introduction} (see also~\cite{harsha2010communication}) we can set $n_k \leq \log k + (1 + \epsilon) \log \log (k + 1) + 1 + \log (\frac{\ln(2)}{\epsilon} + \frac{3}{2})$ for any $\epsilon > 0$; this is proved  in Appendix~\ref{app:kraftanalysis}. Since $t = \frac{1-\alpha}{\alpha}$, $0 < t < 1/2$. Then,
    \begin{align}
        L(t) &= \frac{1}{t}\log(\mathbb{E}\left[ 2^{tn_K} \right]) \nonumber  \\
        &\leq \frac{1}{t} \log( \mathbb{E} \left[ K^t \log(K+1)^{(1+\epsilon)t} \right]) \nonumber + 1 + \log (\frac{\ln (2)}{\epsilon} + \frac{3}{2}) \\
        &\leq \frac{1}{t}\log(\sqrt{\mathbb{E}[K^{2t}]\mathbb{E}[\log(K+1)^{2(1+\epsilon)t}]}) + 1 + \log (\frac{\ln (2)}{\epsilon} + \frac{3}{2}) \label{eq:cauchyschwartz} \\
        &= \frac{1}{2t}\left( \log(\mathbb{E}[K^{2t}]) + \log(\mathbb{E}[\log(K+1)^{2(1+\epsilon)t}]) \right) + 1 + \log (\frac{\ln (2)}{\epsilon} + \frac{3}{2}). \label{eq:intermediateupperbound}
    \end{align}
    Here,~\eqref{eq:cauchyschwartz} follows from the Cauchy-Schwartz inequality. We will bound each of the two terms in~\eqref{eq:intermediateupperbound} separately, beginning with $\mathbb{E}[\log(K+1)^{2(1+\epsilon)t}]$. As $0 < t < \frac{1}{2}$, choosing $0 < \epsilon \leq \frac{1-2t}{2t} = \frac{3\alpha-2}{2-2\alpha}$ gives that $2(1+\epsilon)t \leq 1$, and so Jensen's inequality can be applied as
    \begin{align}
        \frac{1}{2t}\log \left( \mathbb{E}[\log(K + 1)^{2(1+\epsilon)t}] \right) &\leq \frac{1}{2t} \log \left( \mathbb{E}[\log (K + 1)]^{2(1+\epsilon)t} \right) \nonumber \\
    &= (1 + \epsilon) \log \big( \mathbb{E}[\log (K + 1)] \big) \nonumber \\
    &\leq (1 + \epsilon) \log \big ( \mathbb{E}[\log K] + 1 \big) \label{eq:logsplit} \\
    &\leq (1 + \epsilon) \log(D(P||Q) + 1 + 1) \label{eq:logkintermediateupperbound} \\
    &\leq (1 + \epsilon) \log(D(P||Q) + 1) + 1 + \epsilon. \label{eq:logsplit2}
    \end{align}
    Here,~\eqref{eq:logsplit} and~\eqref{eq:logsplit2} both follow from the inequality $\log(x+1) \leq \log(x) + 1$ for all $x \geq 1$ and~\eqref{eq:logkintermediateupperbound} follows from the upper bound~\eqref{eq:logbound} on $\mathbb{E}\left[\log K\right]$ in Lemma~\ref{lemma:kbounds}. In the first term of~\eqref{eq:intermediateupperbound}, because $0 < t < 1/2$ by assumption, we can apply the upper bound~\eqref{eq:alphamomentbound} on $\mathbb{E}\left[K^\alpha\right]$ in Lemma~\ref{lemma:kbounds} to get that
    \begin{align}
        \frac{1}{2t}\log(\mathbb{E}[K^{2t}]) &\leq \frac{1}{2t}\log( 2^{2tD_{2t+1}(P||Q)} + 2t) \nonumber \\
        &\leq D_{2t+1}(P||Q) + 1 \label{eq:kalphaintermediateupperbound},
    \end{align}
    where~\eqref{eq:kalphaintermediateupperbound} holds again by $\log(x + a) \leq \log(x) + a$. Combining~\eqref{eq:logsplit2} and~\eqref{eq:kalphaintermediateupperbound}, and noting $2t+1 = \frac{2-\alpha}{\alpha}$, we obtain
    \begin{align}
        L(t) \leq D_{\frac{2-\alpha}{\alpha}}(P||Q) + (1 + \epsilon)\log(D(P||Q) + 1) + c_2(\epsilon), \nonumber
    \end{align}
    with $c_2(\epsilon) = 2 + \epsilon + \log(\frac{\ln2}{\epsilon} + \frac{3}{2})$.
\end{proof}
We will refer to the upper bound in Theorem~\ref{thm:worseupperbound} as $\text{UB}_2^\alpha \coloneqq D_{\frac{2-\alpha}{\alpha}}(P||Q) + (1 + \epsilon)\log(D(P||Q) + 1) + c_2(\epsilon)$. As $\alpha \to 1$, $\text{UB}_1^\alpha$ reduces to $\mathbb{E}[l(\mathcal{C}(K))] \leq D(P||Q) + (1 + \epsilon) \log (D(P||Q) + 1) + c_2(\epsilon)$ for any $\epsilon > 0$, the same upper bound as in~\cite{harsha2010communication}. Note that $\text{UB}_2^\alpha$ is valid only for $2/3 < \alpha < 1$, whereas $\text{UB}_1^\alpha$ is valid for all $0 < \alpha < 1$. Moreover, for all $2/3 < \alpha < 1$, $\text{UB}_1^\alpha < \text{UB}_2^\alpha$. To see why, observe that for $\epsilon < 1$, $\frac{1 + \epsilon(1-\alpha)}{\alpha} < \frac{2-\alpha}{\alpha}$, i.e., the order of the leading R\'enyi divergence term is strictly less in $\text{UB}_1^\alpha$. Since $D_\alpha(P||Q)$ is nondecreasing in $\alpha$ and is strictly increasing unless $P = Q(\,\cdot\, | A)$ for some event $A$ (see, e.g.,~\cite[Thm. 3]{van2014renyi}) one has $D_{\frac{1 + \epsilon(1-\alpha)}{\alpha}}(P||Q) < D_{\frac{2-\alpha}{\alpha}}(P||Q)$ for general $P$ and $Q$. For such distributions, $(1 + \epsilon)D_{\frac{1 + \epsilon(1-\alpha)}{\alpha}}(P||Q) < D_{\frac{2-\alpha}{\alpha}}(P||Q)$ for $\epsilon > 0$ small enough. As $c_1(\alpha, \epsilon) < c_2(\epsilon)$ for all $2/3 < \alpha < 1$, $\text{UB}_1^\alpha$ is better even for small $D_\alpha(P||Q)$. Numerical results, shown in Section~\ref{sec:simulations}, illustrate this conclusion. 

While Theorems~\ref{thm:lowerbound} to~\ref{thm:worseupperbound} are bounds on the Campbell cost, we can use Proposition~\ref{prop:campbellcost} to obtain bounds on $H_\alpha(K)$. 
\begin{corollary} \label{cor:renyientropybound}
Let $K$ be the output of any exact sampling algorithm with proposal distribution $Q$, target distribution $P$, and common randomness $\qty{U_i}_{i \geq 1}$. For any $0 < \alpha < 1$, $H_\alpha(K) > \max \qty{\text{LB}_1^\alpha, \text{LB}_2^\alpha} - 1$, with $\text{LB}_1^\alpha$ and $\text{LB}_2^\alpha$ given in Theorem~\ref{thm:lowerbound}. Moreover, for $K$ the output of the Poisson functional representation, for any $0 < \alpha < 1$ and $\epsilon > 0$ we have that $H_\alpha(K) \leq \text{UB}_1^\alpha$, with $\text{UB}_1^\alpha$ given in Theorem~\ref{thm:upperbound}. 
\end{corollary}
\begin{proof}
    The upper bound is immediate after substitution in~\eqref{eq:prop1lower} in Proposition~\ref{prop:campbellcost}. The lower bound follows from~\eqref{eq:prop1upper} in Proposition~\ref{prop:campbellcost}, specifically that for any sampling algorithm outputting $K$, there exists an encoding of $K$ such that $L(t) < H_\alpha(K) + 1$. Then, the lower bound $L(t) \geq \max \qty{\text{LB}^\alpha_1, \text{LB}^\alpha_2}$ of Theorem~\ref{thm:lowerbound} applies to this sampling algorithm and encoding, meaning $H_\alpha(K) > \max \qty{\text{LB}^\alpha_1, \text{LB}^\alpha_2} - 1$. 
\end{proof}
In the problem of exact channel simulation used to motivate this paper, we have the following corollaries to Theorem~\ref{thm:upperbound}. 
\begin{corollary}
Let $X$ and $Y$ by arbitrary random variables with Polish alphabets $\mathcal{X}$ and $\mathcal{Y}$. The sender is provided with $x \in \mathcal{X}$ generated according to $\text{P}_X$ and wishes to transmit a message $K$ so that the receiver can generate $y \in \mathcal{Y}$ with distribution $\text{P}_{Y \mid X}( \, \cdot \, | x)$. We assume the sender and receiver have shared access to the sequence $\qty{U_i}_{i \geq 1}$ i.i.d. according to the proposal distribution $Q_{Y_\alpha}$ defined in~\eqref{eq:sibsondistribution}. Let $K$ be the output of the Poisson functional representation. Then, for any $0 < \alpha < 1$ and $\epsilon > 0$,
\begin{equation}
    H_\alpha(K) \leq (1 + \epsilon) I^{\text{\scriptsize s}}_{\frac{1 + \epsilon(1-\alpha)}{\alpha}}(X ;Y) + c_1(\alpha, \epsilon), \label{eq:channelsimulationupperbound}
\end{equation}
where $I_\alpha^{\text{\scriptsize s}} (X; Y)$ is the Sibson $\alpha$-mutual information defined in~\eqref{eq:sibsondefinition} and $c_1(\alpha, \epsilon)$ is as in Theorem~\ref{thm:upperbound}.
\label{cor:channelsimulation}
\end{corollary}
\begin{remark}
    The distribution $Q_{Y_\alpha}$ minimizes $D_\alpha(\text{P}_{XY} || \text{P}_X \times Q_Y)$ over all proposal distributions. Notably, the minimizer is not the marginal distribution $\text{P}_Y$. This differs from the Shannon setting, where selecting the proposal distribution to be $\text{P}_Y$ is known to be optimal. 
\end{remark}
\begin{remark}
    Corollary~\ref{cor:channelsimulation} can be interpreted as a R\'enyi-generalized version of the strong functional representation lemma~\cite{sfrl}. Indeed, for $Z = \qty{U_i}_{i \geq 1}$ the common randomness, we have that
    \begin{equation*}
        H_\alpha(Y \mid Z) \leq H_\alpha(K, Z \mid Z) = H_\alpha(K \mid Z) \leq H_\alpha(K). 
    \end{equation*}
    As such,~\eqref{eq:channelsimulationupperbound} upper bounds the R\'enyi conditional entropy $H_\alpha(Y \mid Z)$. Here, the definition of the R\'enyi conditional entropy can be any which satisfies both conditioning reduces entropy and the data processing inequality; examples include the Arimoto~\cite{arimoto1977information} and Hayashi-Skoric conditional entropies\cite{hayashi2011exponential, vskoric2011sharp, kamatsuka2025several}. 
\end{remark}
\begin{proof}   
    As in the proof of~\cite[Thm. 1]{sfrl}, we condition on $X = x$ and take expectation of the upper bound on $\mathbb{E}[K^r \mid X = x]$ in~\eqref{eq:alphamomentbound} and~\eqref{eq:lemma4result}. Choosing $P = \text{P}_{Y \mid X}(\cdot \mid x)$ and $Q = Q_{Y_\alpha}$, we obtain
    \begin{equation*}
        \mathbb{E}[K^r \mid X = x] \leq \gamma_1(r) 2^{r D_{r + 1}({\text{P}_{Y \mid X}(\cdot \mid x)} || Q_{Y_\alpha})} + \gamma_2(r),
    \end{equation*}
    where $r >0$ and $\gamma_1(r), \gamma_2(r)$ are constants as in~\eqref{eq:alphamomentbound} for $0 < r < 1$ and~\eqref{eq:lemma4result} for $r \geq 1$. Taking expectation with respect to $X$ gives
    \begin{equation*}
        \mathbb{E}[K^r] \leq \gamma_1(r) \mathbb{E}_X\left[ 2^{r D_{r + 1}({\text{P}_{Y \mid X}(\cdot \mid X)} || Q_{Y_\alpha})}\right]  + \gamma_2(r) = \gamma_1(r) 2^{r I_{r+1}^{\text{\scriptsize s}}(X; Y)}  + \gamma_2(r).
    \end{equation*}
    Encoding $K$ as in the proof of Theorem~\ref{thm:upperbound}, combined with this unconditional moment upper bound, gives~\eqref{eq:channelsimulationupperbound}. 
\end{proof}
\begin{corollary}
    In the case of channel simulation with arbitrary input, i.e., where the distribution $\text{P}_X$ is not known, for any $0 < \alpha < 1$ and $\epsilon > 0$, we have
    \begin{equation}
    \max_{x \in \mathcal{X}} H_\alpha(K \mid X = x) \leq (1 + \epsilon) C^{\text{\scriptsize s}}_{\frac{1 + \epsilon(1-\alpha)}{\alpha}} + c_1(\alpha, \epsilon), \label{eq:sibsonupperbound}
\end{equation}
where $C^{\text{\scriptsize s}}_\alpha \coloneqq \max_{\text{P}_X} I_\alpha^{\text{\scriptsize s}}(X; Y) = \min_{Q_Y} \max_{x \in \mathcal{X}} D_{\alpha} (\text{P}_{Y | X}( \cdot | x) || Q_Y)$ is the Sibson capacity of the channel $X \to Y$~\cite{csiszar2002generalized}.
\end{corollary}
\begin{proof}
    Fix $x \in \mathcal{X}$, $P = \text{P}_{Y \mid X}(\cdot \mid x)$, and $Q = Q_Y$. Theorem~\ref{thm:upperbound} gives that
    \begin{equation*}
        H_\alpha(K \mid X = x) \leq (1 + \epsilon) D_{\frac{1 + \epsilon(1-\alpha)}{\alpha}}(\text{P}_{Y \mid X}(\cdot \mid x) || Q_Y) + c_1(\alpha, \epsilon).
    \end{equation*}
    Maximizing over $x \in \mathcal{X}$ and minimizing over $Q_Y$ gives~\eqref{eq:sibsonupperbound}.
\end{proof}

As $\alpha \to 1$ in~\eqref{eq:channelsimulationupperbound}, the leading term becomes $I(X;Y)$, Shannon's mutual information. Similarly, the Sibson capacity in~\eqref{eq:sibsonupperbound} recovers the Shannon capacity as $\alpha \to 1$. 
\section{Asymptotic Results} \label{sec:asymptotics}
A natural extension of the one-shot exact sampling problem is the asymptotic problem, where the common randomness is the i.i.d. sequence $\qty{U_i}_{i \geq 1} \sim Q^{\otimes n}$ and we wish to sample from the product distribution $P^{\otimes n}$ by communicating a single message. In the setting of expected message length, letting $R^*_n$ be the minimum expected message length over all exact samplers with proposal $Q^{\otimes n}$ and target $P^{\otimes n}$, the optimal asymptotic communication rate (bits per sample) is known to be~\cite{harsha2010communication}
\begin{equation}
    \lim_{n \to \infty} \frac{R^*_n}{n} \coloneqq \lim_{n \to \infty} \frac{1}{n}\mathbb{E}[l(\mathcal{C}(K))] = D(P||Q). \nonumber
\end{equation}
Similarly, in the problem of i.i.d. channel simulation, where upon input $x^n$ drawn via $P_{X^n}$ one simulates a sample from the channel $\text{P}_{Y^n \mid X^n}(\; \cdot \, \mid x^n)$, the achievable rate is $\lim_{n \to \infty} R^*_n / n = I(X; Y)$. We can derive analogous asymptotic generalizations for the Campbell cost, $L(t)$. 
\begin{theorem}
    Given distributions $P \ll Q$, for any $t > 0$ let $L^*_n(t)$ be the minimum achievable Campbell cost among all exact sampling algorithms with target $P^{\otimes n}$ and common randomness $\qty{U_i}_{i \geq 1} \sim Q^{\otimes n}$. Then, with $\alpha = \frac{1}{t+1}$,  
    \begin{equation}
        \lim_{n \to \infty} \frac{L^*_n(t)}{n} = D_{\frac{1}{\alpha}}(P||Q). \label{eq:asymptoticcost}
    \end{equation}
\end{theorem}
\begin{proof}
For any $n$, $t > 0$, and sampling algorithm, Theorem~\ref{thm:lowerbound} yields the lower bound
    \begin{equation}
        \frac{L(t)}{n} \geq \frac{1}{n}D_\frac{1}{\alpha}(P^{\otimes n} || Q^{\otimes n}) - \frac{1}{n} \left( \frac{\alpha}{1-\alpha} \log \alpha - 1 \right). \nonumber
    \end{equation}
    The additivity of the R\'enyi divergence for product distributions~\cite{van2014renyi} gives $D_\alpha(P^{\otimes n} || Q^{\otimes n}) = nD_\alpha(P||Q)$, which implies the lower bound $\liminf_{n \to \infty} L^*_n(t) / n \geq D_{1/\alpha}(P||Q)$. The upper bound on $L(t)$ from Theorem~\ref{thm:upperbound} gives that
    \begin{equation}
        \frac{L(t)}{n} \leq \frac{(1 + \epsilon)}{n}D_{\frac{1 + \epsilon(1-\alpha)}{\alpha}}(P^{\otimes n} || Q^{\otimes n}) + \frac{c_1(\alpha, \epsilon)}{n}, \nonumber
    \end{equation}
    for any $n$, $t > 0$, and $\epsilon > 0$. Taking $n \to \infty$, and again applying additivity, we obtain 
    \begin{equation}
        \limsup_{n \to \infty} \frac{L^*_n(t)}{n} \leq (1 + \epsilon)D_{\frac{1 + \epsilon(1-\alpha)}{\alpha}}(P||Q). \label{eq:asymptoticupperboundepsilon}
    \end{equation}
As $\epsilon > 0$ in~\eqref{eq:asymptoticupperboundepsilon} is arbitrary, we can take $\epsilon \to 0$ to get that $\limsup_{n \to \infty} L^*_n(t) / n \leq D_{1 / \alpha}(P||Q)$. Here, we have used the continuity of the R\'enyi divergence in the order to conclude that $\lim_{\epsilon \to 0} D_\frac{1 -\epsilon(1-\alpha)}{\alpha}(P||Q) = D_{1/\alpha}(P||Q)$~\cite{van2014renyi}. The result follows. 
\end{proof}
As $\alpha \to 1$ in~\eqref{eq:asymptoticcost} we recover the optimal communication rate $\lim_{n \to \infty} R^*_n / n = D(P||Q)$. In~\cite{liu2018rejection}, Liu and Verd\'u showed that a causal sampler has a strictly worse expected $\alpha$-moment of $K$ than a noncausal sampler. In particular, they showed (under some regularity assumptions) that for a causal sampler and any $\alpha \in (0, 1)$, $\mathbb{E}[K^\alpha] \approx 2^{\alpha D_{\frac{1}{1-\alpha}}(P||Q)}$. In contrast, for a noncausal sampler, $\mathbb{E}[K^\alpha] \approx 2^{\alpha D_{\alpha + 1}(P||Q)}$. Under mild assumptions, we see a similar fundamental gap in the asymptotic Campbell cost between causal and noncausal sampling algorithms.
\begin{theorem}
    Given distributions $P \ll Q$, for any $t > 0$ let $L^*_n(t)$ be the minimum achievable Campbell cost among all causal sampling algorithms between proposal $Q^{\otimes n}$ and target $P^{\otimes n}$ such that the transmitted index always has a nonincreasing probability distribution. If $t < 1$, we assume that $D_{\frac{1}{1-t} + \delta}(P||Q)$ is finite for some $\delta > 0$. Then, with $\alpha = \frac{1}{t+1}$ we have that
    \begin{equation}
        \liminf_{n \to \infty}\frac{L^*_n(t)}{n} \geq D_{\beta}(P||Q),
    \end{equation}
    where 
    \begin{equation}
        \beta = \begin{cases}
            \frac{\alpha}{2\alpha - 1}, &\alpha \in (\frac{1}{2}, 1) \\
            \infty, &\alpha \in (0, \frac{1}{2}].
        \end{cases} \label{eq:beta}
    \end{equation}
    \label{thm:causallowerbound}
\end{theorem}
\begin{proof}
    The proof can be found in Appendix~\ref{app:causalasymptoticproof}. 
\end{proof}
The assumption in Theorem~\ref{thm:causallowerbound} that $K$ have a nonincreasing probability distribution is far less restrictive than it may first appear. All state-of-the-art exact causal sampling algorithms exhibit this property, including rejection sampling, greedy rejection sampling~\cite{harsha2010communication,flamich2023adaptive}, and greedy Poisson rejection sampling~\cite{flamich2023greedy}. 
\par
In the case of expected message length, causal and noncausal samplers achieve the same optimal communication rate of $D(P||Q)$. With $\beta$ as in~\eqref{eq:beta}, we have that $\beta > 1/\alpha$ and thus $D_\beta(P||Q) > D_{1/\alpha}(P||Q)$ in general. Interestingly, noncausal samplers have a strictly lower optimal asymptotic Campbell cost than causal samplers, and the gap is often significant. For example, if $P$ and $Q$ are Gaussians with different means, for $\alpha \leq 1/2$ a noncausal sampler can achieve an asymptotic cost of $D_{1/\alpha}(P||Q)$, while the asymptotic Campbell cost per sample of a causal sampler, such as greedy rejection sampling or greedy Poisson rejection sampling, is infinite.

Intuitively, a noncausal sampler outperforms a causal sampler by ``looking ahead" in the sequence of samples and resisting the temptation to select the current sample if a better one appears shortly afterwards. The Poisson functional representation appears to realize the advantage of noncausal samplers, and is near-optimal in the Campbell cost setting. Moreover, our results suggest the conjecture that the Poisson functional representation may even be optimal for exact sampling independent of the chosen cost.
\section{Numerical Examples} 
In this section, we give numerical examples comparing $\text{LB}_1^\alpha$, $\text{LB}_2^\alpha$, $\text{UB}_1^\alpha$, and $\text{UB}_2^\alpha$ from Theorems~\ref{thm:lowerbound} to~\ref{thm:worseupperbound}. Let $P$ and $Q$ be normal distributions on $\mathbb{R}$, specifically, for $\mu_1, \mu_2 \in \mathbb{R}$ and $\sigma_1, \sigma_2 \in \mathbb{R}_{\geq 0}$, let $P = \mathcal{N}(\mu_1, \sigma_1)$ and $Q = \mathcal{N}(\mu_2, \sigma_2)$. Trying to losslessly communicate a sample from a normal distribution is not possible; however, by allowing access to a shared source of randomness, we can communicate the sample with finite cost. Fig.~\ref{fig:bounds} compares the bounds for three different choices of the normal distribution parameters; note that the true minimum $L(t)$ is guaranteed to lie between $\text{UB}_1^\alpha$ and $\max \qty{\text{LB}_1^\alpha, \text{LB}_2^\alpha}$. We have only shown the bounds for $0.2 < \alpha < 1$ for display reasons, as both $\text{LB}_1^\alpha$ and $\text{UB}_1^\alpha$ go to $\infty$ as $\alpha \to 0$. 
\label{sec:simulations}
\begin{figure*}[ht]
    \centering
    \begin{subfigure}{0.32\textwidth}
        \includegraphics[width=\textwidth]{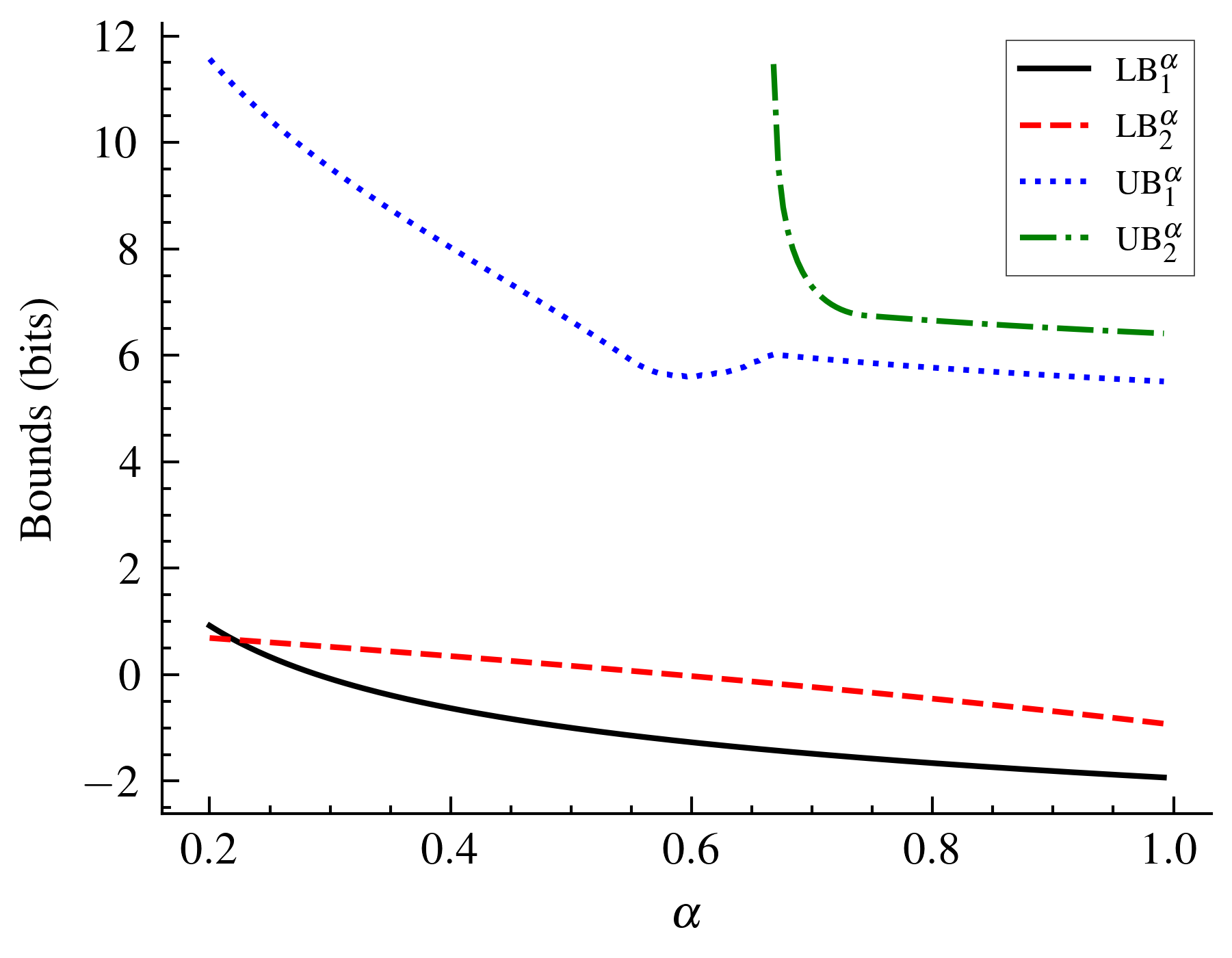}
        \caption{$P = \mathcal{N}(0, 1)$ and $Q = \mathcal{N}(1,1)$}
    \end{subfigure}
    \hfill
    \begin{subfigure}{0.32\textwidth}
        \includegraphics[width=\textwidth]{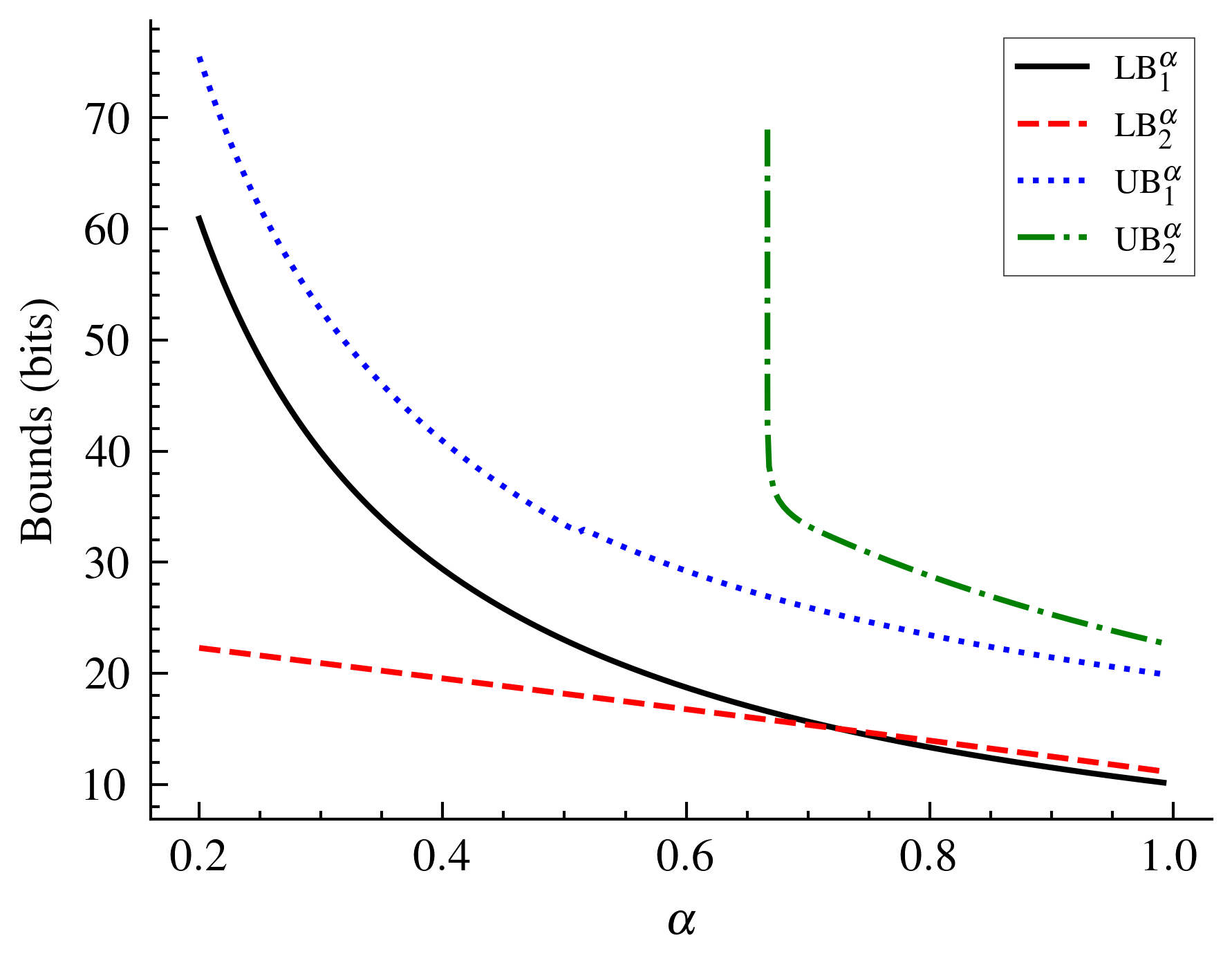}
        \caption{$P = \mathcal{N}(0, 1)$ and $Q = \mathcal{N}(5,1)$}
    \end{subfigure}
    \hfill
    \begin{subfigure}{0.32\textwidth}
        \includegraphics[width=\textwidth]{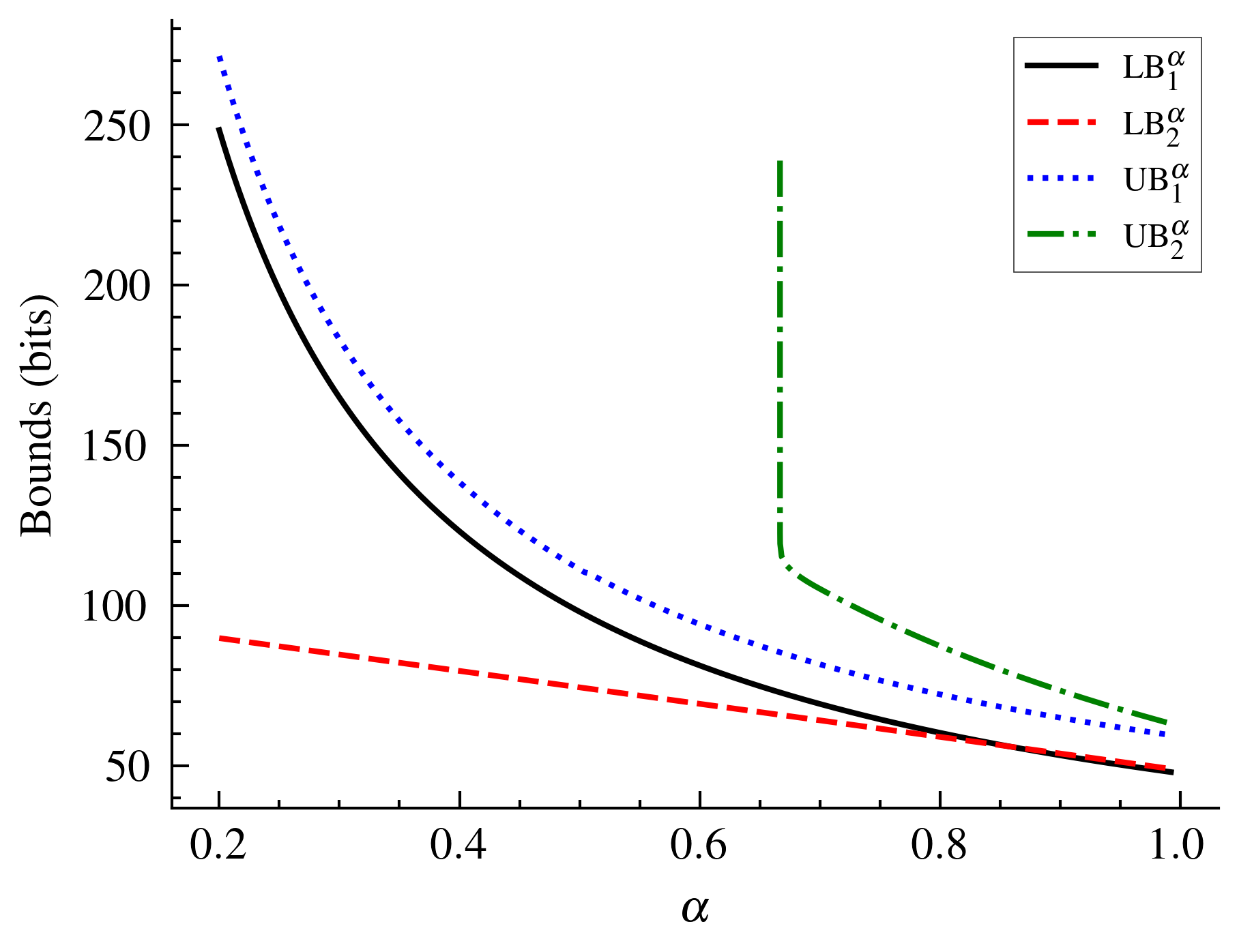}
        \caption{$P = \mathcal{N}(0, 1)$ and $Q = \mathcal{N}(10,1)$}
    \end{subfigure}
    \caption{Comparison of the bounds on the Campbell cost for two normal distributions, for $0.2 < \alpha < 1$.}
    \label{fig:bounds}
\end{figure*}
In both upper bounds, for each value of $\alpha$, the bound is optimized over $\epsilon$ to find the minimum value. As discussed in Section~\ref{sec:upperbounds}, in all cases $\text{UB}_1^\alpha$ is tighter than $\text{UB}_2^\alpha$ for all $2/3 < \alpha < 1$. Since $P = \mathcal{N}(\mu_1, \sigma_1)$ and $Q = \mathcal{N}(\mu_2, \sigma_2)$, we have that 
\begin{equation}
D_{\alpha}(P || Q) = \ln(\frac{\sigma_2}{\sigma_1}) + \frac{\ln(\frac{\sigma_2^2}{(\sigma^2)^*_\alpha})}{2(\alpha - 1)}  + \frac{1}{2} \frac{\alpha (\mu_1 - \mu_2)^2}{(\sigma^2)^*_\alpha}, \nonumber
\end{equation}
for $(\sigma^2)^*_\alpha = \alpha \sigma^2_2 + (1-\alpha)\sigma_1^2$~\cite[p. 45]{Liese_Vajda_1987}; see also~\cite{gil2013renyi}. For $\mu_1 \neq \mu_2$, $D_{1/\alpha}(P||Q) \rightarrow \infty$ as $\alpha \rightarrow 0$, while $\lim_{\alpha \rightarrow 0} D_{2-\alpha} (P||Q) = D_{2}(P||Q) < \infty$. Hence, $\text{LB}_1^\alpha$ matches the asymptotic behaviour of $L(t)$ as $\alpha \rightarrow 0$, while $\text{LB}_2^\alpha$ does not. As discussed in Section~\ref{sec:lowerbounds}, the lesser constant term in $\text{LB}_1^\alpha$ compared to $\text{LB}_2^\alpha$ means that the lower bound in $\text{LB}_2^\alpha$ is tighter for $\alpha$ close to one and small $D_\alpha(P||Q)$; this can be seen in Fig.~\ref{fig:bounds}. 

In some special cases, one can directly compute the distribution of $K$ and thus $H_\alpha(K)$, and then bound the cost $L(t)$ as $L(t) < H_\alpha(K) + 1$. Indeed,~\eqref{eq:distributionofK} gives the explicit formula $K \mid \qty{U_K = u} \sim \text{Geo}(\beta(u))$, with $\beta(u) = \mathbb{E}_{U \sim Q}\left[ \max \qty{\frac{\dd P}{\dd Q}(u), \frac{\dd P}{\dd Q}(U)} \right]^{-1}$. When $P = \mathcal{N}(\mu_1, \sigma_1)$ and $Q = \mathcal{N}(\mu_2, \sigma_2)$ are normal distributions with respective densities $p$ and $q$, we can write for $u \in \mathbb{R}$ that
\begin{align}
    \beta(u)^{-1} &= \mathbb{E}_{U \sim Q}\left[ \max \qty{\frac{\dd P}{\dd Q}(u), \frac{\dd P}{\dd Q}(U)} \right] \nonumber \\
    &= \int_{-\infty}^\infty \frac{\dd P}{\dd Q} (\min \qty{x ,u}) q(x) \dd x  \nonumber \\
    &= \int_{-\infty}^u \frac{\dd P}{\dd Q}(x) q(x) \dd x + \int_u^\infty \frac{\dd P}{\dd Q}(u) q(x) \dd x \nonumber \\
    &= \int_{-\infty}^u p(x) \dd x + \frac{\dd P}{\dd Q}(u) \int_u^\infty q(x) \dd x \nonumber \\
    &= \Phi_P(u) + \frac{\dd P}{\dd Q}(u) \left( 1 - \Phi_Q(u) \right), \nonumber
\end{align}
with $\Phi_P$ and $\Phi_Q$ the cumulative distribution functions of $P$ and $Q$, respectively. We then obtain that
\begin{equation}
    \mathbb{P}(K = k) = \mathbb{E}_{U_K \sim P} \left[ \mathbb{P}(K = k \mid U_K) \right] = \int_{-\infty}^\infty (1 - \beta(u))^{k-1} \beta(u) \; p(u) \; \dd u. \label{eq:distKnormal}
\end{equation}
Eq.~\eqref{eq:distKnormal} gives an explicit formula for the distribution of $K$, which can be computed using numerical quadrature. From this expression, we can closely approximate the R\'enyi entropy $H_\alpha(K)$ as
\begin{equation}
    H_\alpha(K) \approx \frac{1}{1-\alpha} \log( \sum_{k=1}^N \mathbb{P}(K=k)^\alpha ), \label{eq:renyiapproximation}
\end{equation} 
for some large $N$. Fig.~\ref{fig:plotwithrenyientropy} compares the bounds with $H_\alpha(K)+1$, approximated as in~\eqref{eq:renyiapproximation} with $N=1000$.\footnote{The error in this approximation of $H_\alpha(K)$, for $P = \mathcal{N}(0,1)$ and $Q = \mathcal{N}(1, 1)$, is less than $0.02$ bits for all values of $\alpha$.} 
\begin{figure*}[ht]
    \centering
    \includegraphics{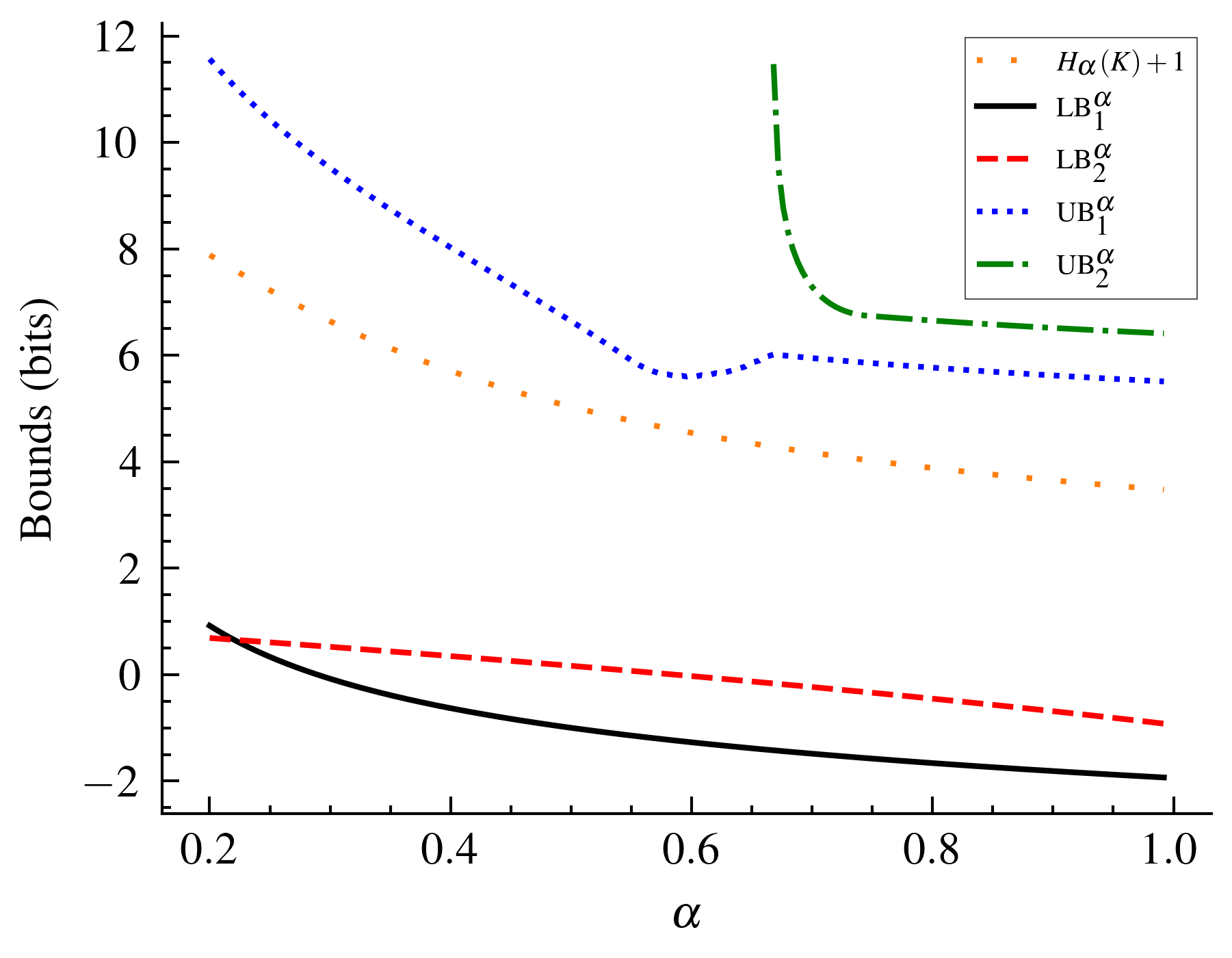}
    \caption{Comparison of the bounds on the Campbell cost and $H_\alpha(K)$ for $P =\mathcal{N}(0, 1)$ and $Q=\mathcal{N}(1, 1)$, for $0.2 < \alpha < 1$.}
    \label{fig:plotwithrenyientropy}
\end{figure*}
As seen in Fig.~\ref{fig:plotwithrenyientropy}, the obtained upper bound is tighter than $\text{UB}_1^\alpha$ or $\text{UB}_2^\alpha$. However, this is a special case where $P$ and $Q$ are close together, and therefore the probability distribution of $K$ is concentrated on low values. In this example, $1 - \sum_{k=1}^{1000} \mathbb{P}(K=k) < 2 \times 10^{-8}$, whereas for $P=\mathcal{N}(0, 1)$ and $Q=\mathcal{N}(5,1)$, $1 - \sum_{k=1}^{1000} \mathbb{P}(K=k) > 0.83$. The distribution of $K$ becomes exceedingly heavy-tailed, making accurately estimating $H_\alpha(K)$ computationally intractable. In contrast, $\text{UB}_1^\alpha$ and $\text{UB}_2^\alpha$ can be computed instantaneously, even for complex distributions that do not admit a simple form of $\mathbb{P}(K=k)$. 

As another numerical example, let $P, Q$ be Laplacian distributions, i.e., for $\theta_1, \theta_2 \in \mathbb{R}$ and $\lambda_1, \lambda_2 > 0$ we have $P = \text{Laplace}(\theta_1, \lambda_1)$ and $Q = \text{Laplace}(\theta_2, \lambda_2)$, where $\text{Laplace}(\theta, \lambda)$ denotes a Laplacian distribution with mean $\theta$ and variance $2\lambda^2$. For $\alpha > 0$ with $\alpha \neq \frac{\lambda_1}{\lambda_1 + \lambda_2}$,~\cite{gil2013renyi} gives the formula for $D_\alpha(P||Q)$ as
\begin{equation}
    D_\alpha(P||Q) = \begin{cases}
        \ln \frac{\lambda_2}{\lambda_1} + \frac{1}{\alpha - 1} \ln( \frac{\lambda_1 \lambda_2^2 g(\alpha)}{\alpha^2 \lambda_2^2 - (1-\alpha)^2 \lambda_1^2}), &\alpha \lambda_2 + (1-\alpha)\lambda_1 > 0, \\
        \infty, &\alpha \lambda_2 + (1-\alpha)\lambda_1 \leq 0,
    \end{cases} \label{eq:laplaciandivergence}
\end{equation}
where 
\begin{equation}
    g(\alpha) \coloneqq \frac{\alpha}{\lambda_1} \exp(- \frac{(1-\alpha) \lvert \theta_1 - \theta_2 \rvert}{\lambda_2}) - \frac{1-\alpha}{\lambda_2} \exp( -\frac{\alpha\lvert \theta_1 - \theta_2 \rvert}{\lambda_1} ). \nonumber
\end{equation}
Fig.~\ref{fig:laplacianbounds} compares the bounds of Theorems~\ref{thm:lowerbound} to~\ref{thm:worseupperbound} for the Laplacian distributions $P$ and $Q$. 
\begin{figure*}[ht]
    \centering
    \begin{subfigure}{0.32\textwidth}
        \includegraphics[width=\textwidth]{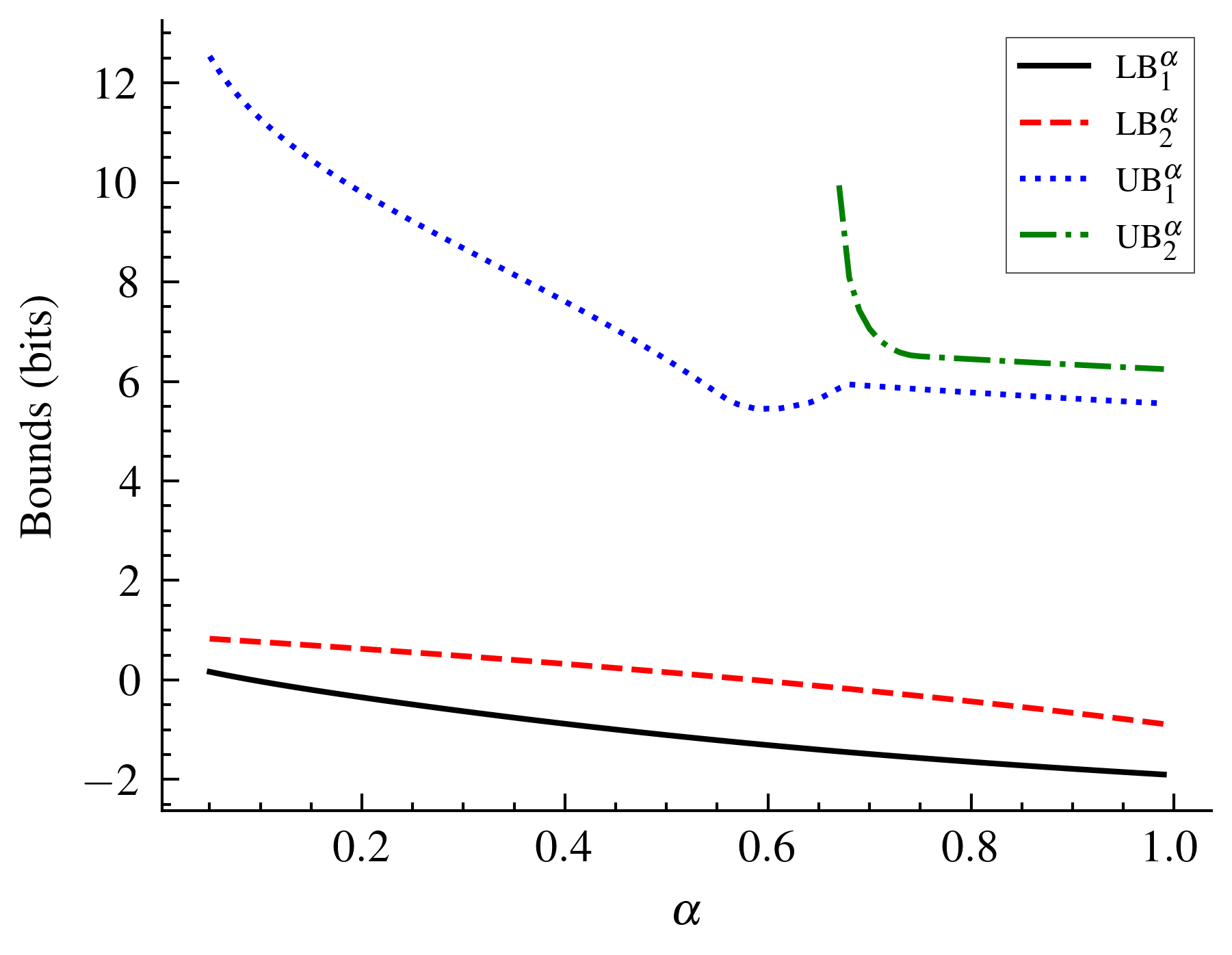}
        \caption{$\theta_1 = 0, \lambda_1 = 1$ and $\theta_2 = 1, \lambda_2 = 1$}
    \end{subfigure}
    \hfill
    \begin{subfigure}{0.32\textwidth}
        \includegraphics[width=\textwidth]{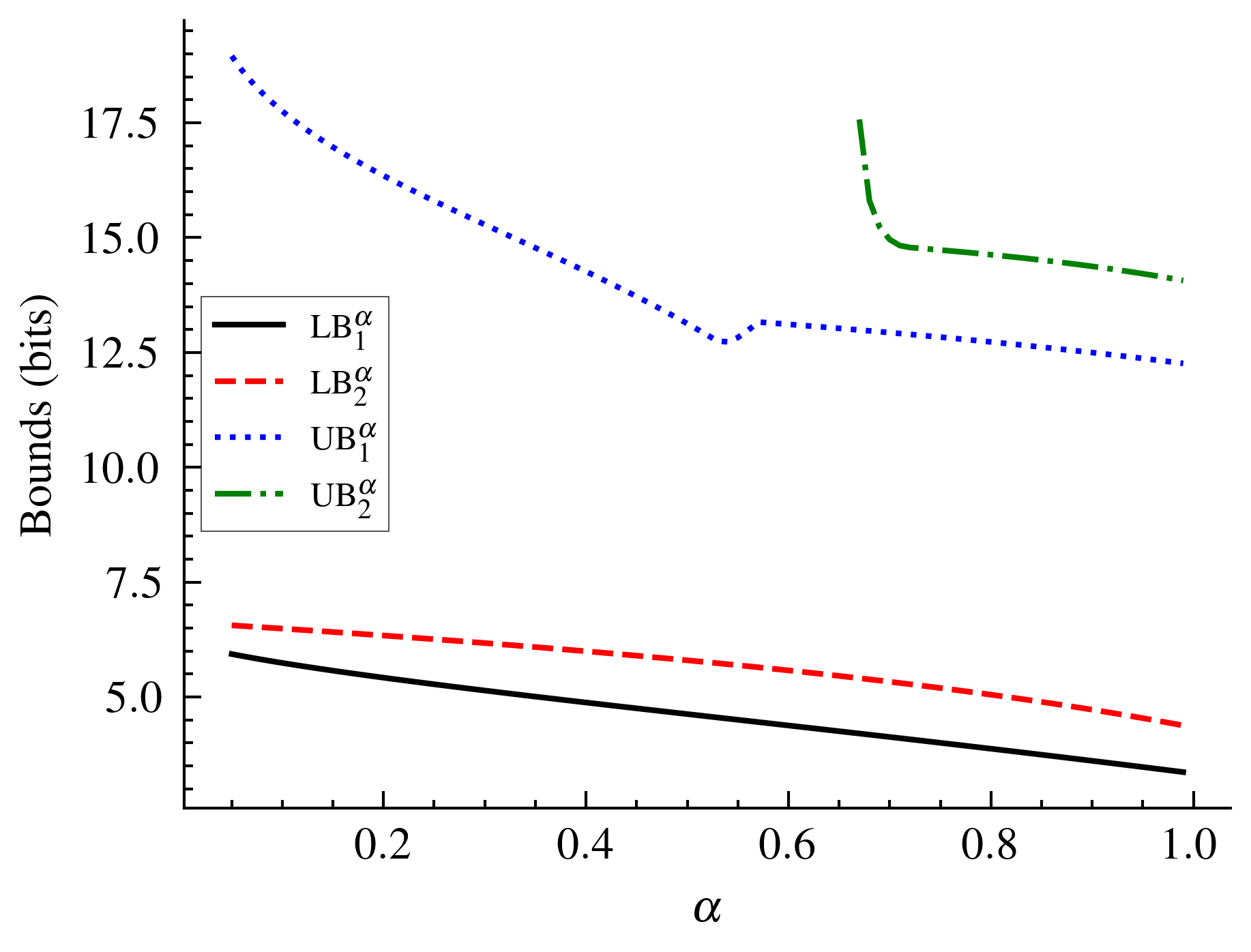}
        \caption{$\theta_1 = 0, \lambda_1 = 1$ and $\theta_2 = 5, \lambda_2 = 1$}
    \end{subfigure}
    \hfill
    \begin{subfigure}{0.32\textwidth}
        \includegraphics[width=\textwidth]{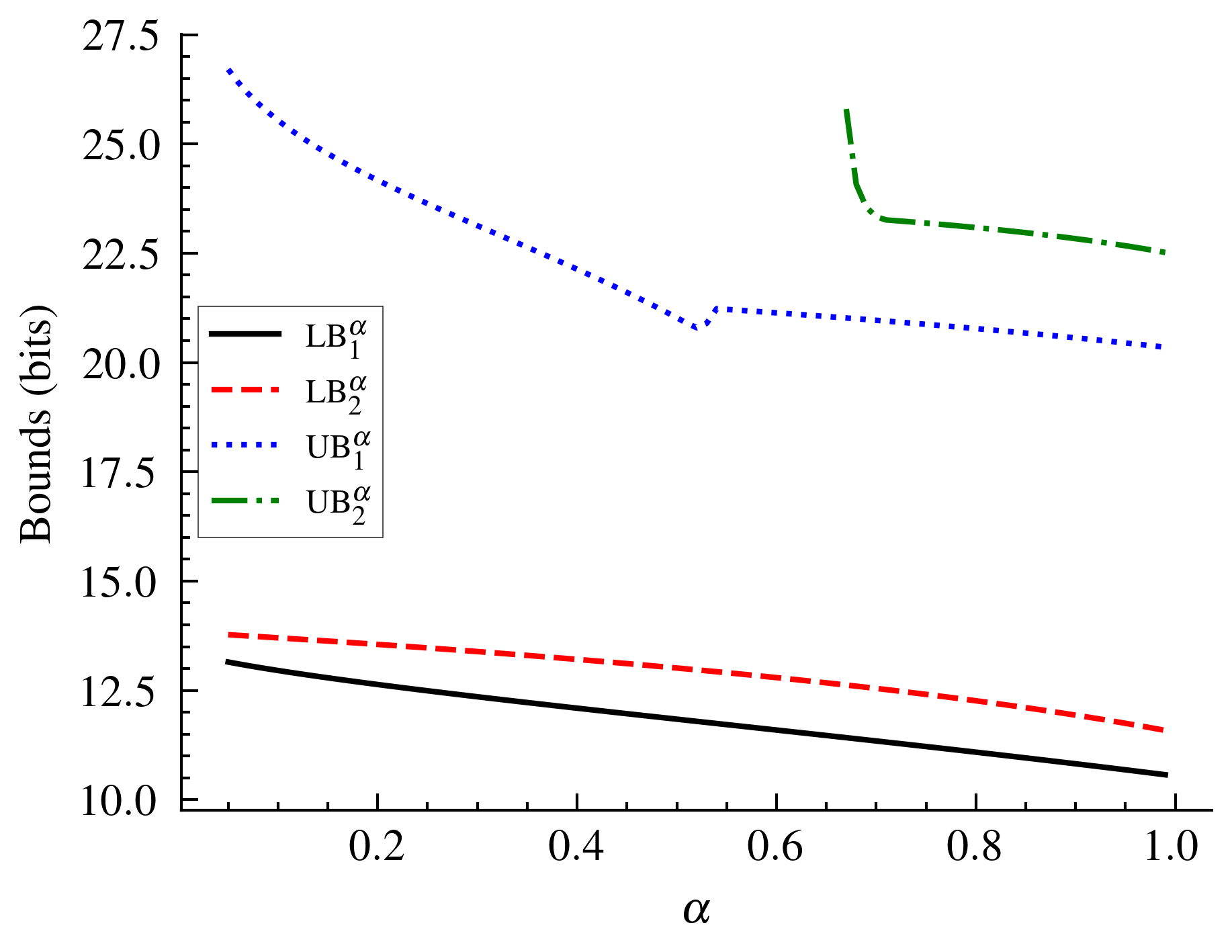}
        \caption{$\theta_1 = 0, \lambda_1 = 1$ and $\theta_2 = 10, \lambda_2 = 1$}
    \end{subfigure}
    \caption{Comparison of the bounds on the Campbell cost for $P = \text{Laplace}(\theta_1, \lambda_1)$ and $Q = \text{Laplace}(\theta_2, \lambda_2)$, for $0.05 < \alpha < 1$.}
    \label{fig:laplacianbounds}
\end{figure*}
Observe that, in each case, $\text{LB}_2^\alpha$ is tighter, despite having a leading R\'enyi divergence term of $D_{2-\alpha}(P||Q)$. 
As seen in~\eqref{eq:laplaciandivergence}, the value of $\alpha$ has less impact on its R\'enyi divergence (than in the case of normal distributions), and therefore, the larger constant term makes $\text{LB}_2^\alpha$ tighter than $\text{LB}_1^\alpha$. Moreover, in contrast to when $P$ and $Q$ are normal distributions, for $P = \text{Laplace}(\theta_1, \lambda_1)$ and $Q = \text{Laplace}(\theta_2, \lambda_2)$, if $\lambda_2 \geq \lambda_1$ then $\lim_{\alpha \to \infty} D_\alpha(P||Q) < \infty$. Hence, the asymptotics of $\text{LB}_1^\alpha$ and $\text{LB}_2^\alpha$ align more closely as $\alpha \to 0$ than in the case of Gaussians. In all three cases in Fig.~\ref{fig:laplacianbounds}, observe that $\text{LB}_2^\alpha$ and $\text{UB}_1^\alpha$ are still within 5-10 bits for most $\alpha$, diverging only as $\alpha \to 0$. 
\section{Conclusion}
In this paper, we have generalized bounds on the expected communication cost of one-shot exact sampling to the Campbell cost and R\'enyi's entropy. Such bounds on the exponential cost are useful in situations where one wishes to disproportionally penalize long codewords, such as applications with buffer overflow. Several interesting directions for future work stem from these results. Most obviously, it is an open question if $\text{UB}_1^\alpha$ can be tightened to be in the spirit of~\cite{sfrl}, specifically so that~\eqref{eq:thmupperbound} reduces to $\mathbb{E}[l(\mathcal{C}(K))] \leq D(P||Q) + \log (D(P||Q) + 1) + O(1)$ as $\alpha \to 1$. We also conjecture that the asymptotic gap between causal and noncausal samplers shown in Section~\ref{sec:asymptotics} still holds in the one-shot case and without the requirement of a nonincreasing probability distribution. In the broader field of channel simulation, the upper bound in Corollary~\ref{cor:channelsimulation} suggests the question of whether $I^{\text{\scriptsize s}}_{1/\alpha}(X; Y)$ is a lower bound on the Campbell cost of \textit{any} channel simulation algorithm. While most state-of-the-art channel simulation algorithms rely on sampling, there are other, fundamentally different, protocols~\cite{sriramu2024fast}. Proving such a lower bound would characterize the optimal asymptotic Campbell rate of channel simulation under known and worst-case input. More generally, the strong functional representation lemma has been applied to several problems outside of one-shot exact channel simulation, most notably one-shot variable-length lossy source coding and multiple description coding~\cite{sfrl}. It would be interesting to see if the strong functional representation lemma can play a role in information-theoretic coding problems using the Campbell cost or R\'enyi entropy.

\section*{Acknowledgment}
The authors would like to thank the reviewers for their helpful comments and feedback.
\appendix{}
\subsection{Proof of Lemma~\ref{lemma:alphamomentlowerbound}}
\label{app:kmomentproof}
\begin{proof}[\nopunct] We first let $g(k) = k$, i.e., we show that 
\begin{equation}
    E[K^\alpha] \geq \frac{1}{1+\alpha} 2^{\alpha D_{\alpha + 1}(P||Q)}. \label{eq:alphamomentlowerbound}
\end{equation}
To do so, we will first show that for any $k \in \mathbb{N}$, $\mathbb{P}(K = k \mid U_K = u) \leq \frac{1}{\frac{\dd P}{\dd Q}(u)}$ for $P$-almost every $u \in \mathcal{U}$. Define $\nu$ to be the probability measure generated by the joint distribution of $K$ and $U_K$, i.e., $\nu(A) \coloneqq \mathbb{P}((K, U_K) \in A)$ for each Borel set $A \subseteq \mathbb{N} \times \mathcal{U}$. Then, defining $\nu_k(B) \coloneqq \nu(k \times B)$, for any Borel set $B \subseteq \mathcal{U}$ we get that 
\begin{align*}
    \int_B 1 \dd Q(u) = Q(B) &= \sum_{k=1}^\infty \nu_k(B) \\
    &\geq \nu_k(B) \\
    &= \int_B \mathbb{P}(K = k \mid U_K = u) \dd P(u) \\
    &= \int_B \mathbb{P}(K = k \mid U_K = u) \frac{\dd P}{\dd Q}(u) \dd Q(u).
\end{align*}
    Thus,
    \begin{equation}
        \mathbb{P}(K = k \mid U_K = u) \leq \frac{1}{\frac{\dd P}{\dd Q}(u)} \label{eq:conditionalboundpk}
    \end{equation}
    for $P$-almost every $u \in \mathcal{U}$. Let $E \subseteq \mathcal{U}$ be the set of all $u$ where \eqref{eq:conditionalboundpk} holds. For $u \in E$, writing $p_u(k) = \mathbb{P}(K = k | U_K = u)$ and $c_u = (\dd P / \dd Q)(u)$, we have that $p_u(k) \leq \frac{1}{c_u}$ for all $k$. Then, for $\qty{c_u} \coloneqq c_u - \lfloor c_u \rfloor$ the fractional part of $c_u$, we claim
    \begin{equation}
        \sum_{k=1}^\infty k^\alpha p_u(k) \geq \frac{1}{c_u} \left( \sum_{k=1}^{\lfloor c_u \rfloor } k^\alpha + \{c_u\}(\lceil c_u \rceil)^\alpha \right) \label{eq:redistribution}.
    \end{equation}
    If $p_u(k)$ does not have distribution $p_u(k) = \frac{1}{c_u}$ for $k \in \qty{1, \ldots, \lfloor c_u \rfloor}$ and $p_u(\lceil c_u \rceil) = \frac{\qty{c_u}}{c_u}$, we can move any excess probability from indices $k \geq \lceil c_u \rceil$ until it has such a distribution. Since we shift probabilities from indices $k \geq \lceil c_u \rceil$ to $k \leq \lceil c_u \rceil$ and $k^\alpha$ is increasing in $k$,~\eqref{eq:redistribution} must hold. Hence, for all $u \in E$,
    \begin{align}
        \mathbb{E}\left[K^\alpha \mid U_K = u \right] = \sum_{k=1}^\infty k^\alpha p_u(k) &\geq \frac{1}{c_u} \left( \sum_{k=1}^{\lfloor c_u \rfloor } k^\alpha + \{c_u\}(\lceil c_u \rceil)^\alpha \right) \nonumber \\
        &\geq \frac{1}{c_u}\left( \int_0^{c_u} t^\alpha \dd t \right) = \frac{1}{1 + \alpha} c_u^{\alpha} = \frac{1}{1 + \alpha} \left[ \frac{\dd P}{\dd Q}(u) \right]^\alpha, \label{eq:integrallowerbound}
    \end{align}
    where the inequality in~\eqref{eq:integrallowerbound} follows because $f(t) = t^\alpha$ is an increasing function of $t$. Taking expectation over $U \sim P$ gives, as $P(E) = 1$, the lower bound in~\eqref{eq:alphamomentlowerbound}. 
    
    For an arbitrary bijection $g : \mathbb{N} \rightarrow \mathbb{N}$, note that for every $k \in \mathbb{N}$,~\eqref{eq:conditionalboundpk} implies
    \begin{equation}
        \mathbb{P}(g(K) = k \mid U_K = u) = \mathbb{P}(K = g^{-1}(k) \mid U_K = u) \leq \frac{1}{\frac{\dd P}{\dd Q}(u)}, \nonumber
    \end{equation}
    for $P$-almost every $u \in \mathcal{U}$. Following the same steps as above, we can conclude that $\mathbb{E}[(g(K))^\alpha] \geq \frac{1}{1+\alpha} 2^{\alpha D_{\alpha+1}(P||Q)}$.
\end{proof} 

\subsection{Proof of Proposition~\ref{prop:simpleconstant}} \label{app:simpleconstant}
\begin{proof}[\nopunct]
We aim to show that, for any $\epsilon > 0$ and $0 < \alpha < 1$, 
\begin{equation}
    c_1(\alpha, \epsilon) < \left( \frac{1}{2(1-\alpha)} + \frac{1}{2} + \epsilon \right) \log(\frac{1}{\alpha}) + \log(\frac{1}{\epsilon}) + 1.5\epsilon^2 + 4.5 \epsilon + 2.6.
\end{equation}
First let $1/2 < \alpha < 1$ and $\epsilon < \frac{2\alpha - 1}{1-\alpha}$ (we are in the first branch of $c_1(\alpha, \epsilon)$) and note that
\begin{align}
    (1 + \epsilon) \log e + 1 + \log (1 + \frac{1}{\epsilon}) &= \log(\frac{1}{\epsilon}) + \log(\epsilon + 1) + (1 + \epsilon) \log e+ 1 \nonumber \\
    &\leq \log(\frac{1}{\epsilon}) + 2\epsilon \log e + \log e+ 1 \label{eq:inequalityhere} \\
    &< \log (\frac{1}{\epsilon}) + 3\epsilon + 2.6,
\end{align}
where~\eqref{eq:inequalityhere} follows from $\log (x + 1) \leq x \log e$ for any $x > 0$. As $\epsilon^2 > 0$ and $\log(\frac{1}{\alpha}) > 0$ for $\alpha > 1/2$, the proposition holds for this case. \\
Let now $0 < \alpha < 1/2$ or $\epsilon \geq \frac{2\alpha - 1}{1-\alpha}$. Let $x = \frac{1 + \epsilon(1-\alpha)}{\alpha}$. Note that the domain assumption on $\alpha$ and $\epsilon$ guarantees $x \geq 2$. Using the Taylor series bound $\log \Gamma(x) < (x - 1/2)\log x + (1-x)\log e$, valid for $x \geq 1$, we obtain that
\begin{align}
     \frac{\alpha}{1-\alpha}\log \Gamma(x) &< \frac{\alpha}{1-\alpha} \left[ \left( \frac{1}{\alpha} + \epsilon \frac{1-\alpha}{\alpha} - \frac{1}{2}\right)  \log(\frac{1 + \epsilon(1-\alpha)}{\alpha}) + \left(1 - \frac{1}{\alpha} - \epsilon \frac{1-\alpha}{\alpha} \right) \log e\right] \nonumber \\
     &= \left( \frac{1 + 1 - \alpha}{2(1-\alpha)} + \epsilon \right)  \left( \log(\frac{1}{\alpha}) + \log(1 + \epsilon(1-\alpha)) \right) + \left(\frac{\alpha}{1-\alpha} - \frac{1}{1-\alpha} - \epsilon \right) \log e \nonumber \\
    &\leq \left( \frac{1}{2(1-\alpha)} + \frac{1}{2} + \epsilon \right) \log(\frac{1}{\alpha}) + \left( \frac{1}{2(1-\alpha)} + \frac{1}{2} + \epsilon \right) \epsilon(1-\alpha)\log e -(1+ \epsilon) \log e 
    \label{eq:loginequalitystep} \\
    &= \left( \frac{1}{2(1-\alpha)} + \frac{1}{2} + \epsilon \right) \log(\frac{1}{\alpha}) - \log e - \frac{\alpha \epsilon}{2} \log e + (1-\alpha)\epsilon^2 \log e. \nonumber
\end{align}
where~\eqref{eq:loginequalitystep} again follows from $\log(x + 1) \leq x \log e$. Combining this bound with the remaining terms of $c_1(\alpha, \epsilon)$ gives, after collecting like terms,
\begin{align}
    c_1(\alpha, \epsilon) &< \left( \frac{1}{2(1-\alpha)} + \frac{1}{2} + \epsilon \right) \log(\frac{1}{\alpha}) + \log (\frac{1}{\epsilon}) + \left( 4 - \log e - \frac{2\alpha}{1-\alpha} \right) + \epsilon \left( 3 + \left(1 - \frac{\alpha}{2} \right) \log e \right) + \epsilon^2 (1-\alpha) \log e \nonumber  \\
    &< \left( \frac{1}{2(1-\alpha)} + \frac{1}{2} + \epsilon \right) \log(\frac{1}{\alpha}) + \log (\frac{1}{\epsilon}) + 1.5 \epsilon^2  + 4.5 \epsilon + 2.6. \nonumber
\end{align}
The result follows.
\end{proof}
\subsection{Proof of Lemma~\ref{lemma:geometricmoment}}
\label{app:lemmageometricproof}
\begin{proof}[\nopunct]
The case $r = 1$ holds trivially, as $\mathbb{E}[X] = \frac{1}{p}$. Let $r > 1$. We know that $X = \lceil Y \rceil$, where $Y$ is an exponential random variable with parameter $\lambda = -\ln(1-p)$. Thus, $X \leq Y + 1$ and we have
\begin{equation*}
    \mathbb{E}[Y^r] = \int_0^\infty y^r \lambda e^{-\lambda y} \dd y = \int_0^\infty \left( \frac{x}{\lambda} \right)^r e^{-x} \dd x = \frac{\Gamma(r + 1)}{\lambda^r}. 
\end{equation*}
Further, note that for any $r > 1$ and $x > 0$, $(x + 1)^r \leq 2^{r-1} (x^r + 1)$. We can therefore upper bound $\mathbb{E}[X^r]$ as 
\begin{align}
    \mathbb{E}[X^r] \leq \mathbb{E}[(Y + 1)^r] &\leq 2^{r-1} \left( \mathbb{E}[Y^r] + 1 \right) \nonumber \\
    &= 2^{r-1} \left( \frac{\Gamma(r + 1)}{\lambda^r} + 1\right) \nonumber \\
    &= 2^{r-1} \left( \frac{\Gamma(r + 1)}{\left[ -\ln(1-p) \right]^r} + 1\right) \nonumber \\
    &\leq 2^{r-1} \left( \frac{\Gamma(r+1)}{p^r} + 1 \right), \label{eq:lnbound}
\end{align}
where~\eqref{eq:lnbound} follows because $p \leq -\ln(1-p)$ for any $0 < p < 1$ and $x \mapsto x^r$ is an increasing function. 
\end{proof}
\subsection{Constant Term Analysis}
\label{app:kraftanalysis}
We herein prove the existence of the uniquely decodable encodings of $K$ used in the proofs of Theorems~\ref{thm:upperbound} and~\ref{thm:worseupperbound}. First, in Theorem~\ref{thm:upperbound}, we claim that there exists a prefix code with codeword lengths $n_k = (1 + \epsilon) \log k + c_\epsilon$, for any $\epsilon > 0$ and $c_\epsilon \leq 1 + \log (1 + \frac{1}{\epsilon})$ a constant such that $n_k$ is an integer and the code is a valid prefix code. The existence of such a code is guaranteed by the Kraft inequality, since for all $\epsilon > 0$ there exists $c_\epsilon$ such that
\begin{equation}
    \sum_{k=1}^\infty \frac{1}{k^{(1+\epsilon)}} \leq 2^{c_\epsilon}. \label{eq:kraftinequality}
\end{equation}
We can upper bound the sum in~\eqref{eq:kraftinequality} by
\begin{equation}
    \sum_{k=1}^\infty \frac{1}{k^{(1+\epsilon)}} \leq 1 + \int_{1}^\infty x^{-(1+\epsilon)} \dd x = 1 + \frac{1}{\epsilon}. \nonumber
\end{equation}
Returning to the requirement that $n_k$ be an integer, we can take the ceiling of our codeword length, imposing at most a $+1$ penalty, to get that
\begin{equation}
    c_\epsilon \leq 1 + \log(1 + \frac{1}{\epsilon}), \nonumber
\end{equation}
as claimed. Similarly, in Theorem~\ref{thm:worseupperbound} we consider a universal code with codeword lengths $n_k = \log k  + (1 + \epsilon)\log \log (k+1) + c_\epsilon$, for any $\epsilon > 0$ and $c_\epsilon \leq 1 + \log (\frac{\ln(2)}{\epsilon} + \frac{3}{2})$. By the Kraft inequality, we require that
\begin{equation}
    1 \geq \sum_{k=1}^\infty 2^{-n_k} = \sum_{k=1}^\infty \frac{1}{k\log^{(1+\epsilon)}(k+1)2^{c_\epsilon}}. \nonumber
\end{equation}
Hence, setting 
\begin{equation}
    c_\epsilon = \log (\sum_{k=1}^\infty \frac{1}{k\log^{(1+\epsilon)}(k+1)}) < \infty, \label{eq:constanttermsum}
\end{equation}
the code satisfies the Kraft inequality with equality. We can upper bound the sum in~\eqref{eq:constanttermsum} by
\begin{align}
    &\sum_{k=1}^\infty \frac{1}{k\log^{(1+\epsilon)}(k+1)}  \nonumber \\
    &= \sum_{k=1}^\infty \left[ \frac{1}{(k+1) \log^{\epsilon + 1}(k + 1)} + \frac{1}{k(k+1)\log^{1+\epsilon}(k+1)} \right]. \nonumber \\
    &\leq \frac{1}{2} + \int_{1}^\infty \frac{1}{(x+1)\log^{(1+\epsilon)}(k+1)} \dd x + \sum_{k=1}^\infty \frac{1}{k(k+1)} \nonumber \\
    &\leq \frac{1}{2} + \frac{\ln(2)}{\epsilon} + 1 = \frac{3}{2} + \frac{\ln(2)}{\epsilon}. \nonumber
\end{align}
Again imposing at most a $+1$ penalty to ensure $n_k$ is an integer, we obtain that
\begin{equation}
    c_\epsilon \leq 1 + \log(\frac{3}{2} + \frac{\ln(2)}{\epsilon}). \nonumber
\end{equation}
\subsection{Proof of Theorem~\ref{thm:causallowerbound}} \label{app:causalasymptoticproof}
\begin{proof}[\nopunct]
    First assume that $t < 1$ (consequently $\alpha \in (1/2, 1)$). Let $K_n$ be the communicated index when the proposal and target distributions are $P^{\otimes n}$ and $Q^{\otimes n}$, respectively. We first wish to prove that for any $\epsilon > 0$,
    \begin{equation}  
        \liminf_{n \to \infty}\frac{1}{n t} \log( \mathbb{E}[K_n^t]) \geq D_{\frac{1}{1-t}}(P||Q) - \frac{2\epsilon}{t}. \label{eq:epsilontozero}
    \end{equation}
    
    Our proof follows closely~\cite{liu2018rejection}, but contains details omitted there. Define the tilted distribution $R_n$ by
    \begin{align*}
        \frac{\dd R_n}{\dd Q^{\otimes n}} &\coloneqq \frac{\exp(\frac{1}{1-t} \log \frac{\dd P^{\otimes n}}{\dd Q^{\otimes n}})}{\int \exp(\frac{1}{1-t} \log \frac{\dd P^{\otimes n}}{\dd Q^{\otimes n}}) \dd Q^{\otimes n}}.
    \end{align*}
    Define the random variable $X_n = \log \frac{\dd P^{\otimes n}}{\dd Q^{\otimes n}}(Z_n) = \sum_{i = 1}^n \log \frac{\dd P}{\dd Q}(Z_i)$, for $Z_n \sim R_n$ and $Z \sim R$ (the one-dimensional tilted distribution). Fix $\epsilon > 0$ and define
    \begin{align*}
        \tau_n &\coloneqq \mathbb{E}[X_n] - \epsilon n; \\
        q_n &\coloneqq \int \exp(\frac{1}{1-t} \log \frac{\dd P^{\otimes n}}{\dd Q^{\otimes n}}) \dd Q^{\otimes n} = \exp( \frac{t}{1-t} D_{\frac{1}{1-t}}(P^{\otimes n} || Q^{\otimes n})).
    \end{align*}
    Then,
    \begin{align*}
        Q^{\otimes n} \left( \qty{\tau_n < \log \frac{\dd P^{\otimes n}}{\dd Q^{\otimes n}}} \right) &\geq Q^{\otimes n} \left( \qty{\tau_n <  \log \frac{\dd P^{\otimes n}}{\dd Q^{\otimes n}} < \tau_n + 2 \epsilon n }\right) \\
        &\geq q_n \exp(- \frac{\tau_n + 2\epsilon n}{1-t}) R_n \left( \qty{ \tau_n < \log \frac{\dd P^{\otimes n}}{\dd Q^{\otimes n}} < \tau_n + 2\epsilon n }\right) \\
        &= q_n \exp(- \frac{\tau_n + 2\epsilon n}{1-t}) \mathbb{P}\left( \lvert X_n - \mathbb{E}[X_n] \rvert < \epsilon n \right). \label{eq:probability}
    \end{align*}
    Theorem 3 in~\cite{liu2018rejection} proves that 
    \begin{equation*}
        \mathbb{E}[K_n^t] \geq  (1-t) t^t \sup_{\tau \in \mathbb{R}} \exp(\tau) \left[ Q^{\otimes n} \left(\log \frac{\dd P^{\otimes n}}{\dd Q^{\otimes n}} > \tau \right) \right]^{1-t}.
    \end{equation*}
    Choosing $\tau = \tau_n$, noting that we require $ \mathbb{E}[ \lvert X_n \rvert] < \infty$ (which will be established subsequently), we obtain that
    \begin{align*}
        \mathbb{E}[K_n^t] &\geq (1-t)t^t \exp(\tau_n) q_n^{1-t} \exp(- \frac{\tau_n + 2\epsilon n}{1-t} \cdot 1-t) \left[ \mathbb{P}(\lvert X_n - \mathbb{E}[X_n] \rvert < \epsilon n) \right]^{1-t} \\
        &= (1-t)t^t \exp(t D_{\frac{1}{1-t}}(P^{\otimes n} || Q^{\otimes n})) \exp(-2 \epsilon n) \left[ \mathbb{P}(\lvert X_n - \mathbb{E}[X_n] \rvert < \epsilon n) \right]^{1-t}. 
    \end{align*}
        Therefore, taking logarithms and dividing by $n t$ we get that (using the additivity of the R\'enyi divergence for product distributions)
        \begin{align}
            \frac{1}{nt}\log(\mathbb{E}[K^t_n]) \geq D_{\frac{1}{1-t}}(P||Q) + \frac{1}{nt} \log ((1-t)t^t) - \frac{2\epsilon}{t} + \frac{1-t}{t} \log (\mathbb{P}( \lvert X_n - \mathbb{E}[X_n] \rvert < \epsilon n)). 
            \label{eq:firstprobability}
        \end{align}
    We can rewrite the probability in~\eqref{eq:firstprobability} as 
    \begin{align*}
        \mathbb{P} \left( \left\lvert \sum_{i = 1}^n \log \frac{\dd P}{\dd Q}(Z_i) - n\mathbb{E}_{Z \sim R}\left[\log \frac{\dd P}{\dd Q}(Z)\right] \right\rvert < \epsilon n \right)
        &= \mathbb{P} \left( \left\lvert \frac{1}{n} \sum_{i = 1}^n \log \frac{\dd P}{\dd Q}(Z_i) - \mathbb{E}_{Z \sim R}\left[\log \frac{\dd P}{\dd Q}(Z)\right] \right\rvert < \epsilon \right).
    \end{align*}
    Again with the assumption that $\mathbb{E}[ \lvert X_n \rvert] < \infty$, the weak law of large numbers (e.g.~\cite[Cor. 5.4.11]{firstlookrosenthal}) gives that the probability in~\eqref{eq:firstprobability} tends to 1;~\eqref{eq:epsilontozero} follows.
    \par
    Now, as $K_n$ has a nonincreasing probability distribution by assumption, a lower bound on its encoding cost is the optimal one-to-one code with codeword lengths $\lfloor \log(k+1) \rfloor$ for every $k \in \mathbb{N}$. Therefore,
    \begin{align}
        L^*_n(t) &\geq \frac{1}{t} \log( \sum_{k = 1}^\infty \mathbb{P}(K_n = k) 2^{t \lfloor \log(k+1) \rfloor} ) \nonumber \\
        &\geq \frac{1}{t} \log( \sum_{k = 1}^\infty \mathbb{P}(K_n = k) 2^{t (\log(k+1) - 1) }) \nonumber \\
        &\geq \frac{1}{t} \log( \mathbb{E}[K_n^t]) - 1 \label{eq:intermediatelowerbound}.
    \end{align}
    Dividing both sides by $n$ and taking $n \to \infty$ gives that, for any $\epsilon > 0$,
    \begin{equation}
        \liminf_{n \to \infty} \frac{L^*_n(t)}{n} \geq \liminf_{n \to \infty} \frac{1}{nt} \log( \mathbb{E}[K^t_n]) \geq D_{\frac{1}{1-t}}(P||Q) - \frac{2\epsilon}{t}. \nonumber
    \end{equation}
    Noting that $\alpha = \frac{1}{1+t}$ and taking $\epsilon \to 0$ yields the asymptotic inequality.
    \par 
    We now turn to the assumption that $\mathbb{E}[ \lvert X_n \rvert] < \infty$, which is equivalent to $ \mathbb{E}_{Z \sim R}[\lvert \log \frac{\dd P}{\dd Q}(Z) \rvert] < \infty$. We will show this follows from our assumption that $D_{\frac{1}{1-t} + \delta}(P||Q) < \infty$ for some $\delta > 0$. Using a change of measure, we can write
    \begin{align}
        \int \left\lvert \log \frac{\dd P}{\dd Q} \right\rvert \dd R &= \int \left\lvert \log \frac{\dd P}{\dd Q} \right\rvert \frac{\left( \frac{\dd P}{\dd Q} \right)^{\frac{1}{1-t}}}{\mathbb{E}_Q \left[\left( \frac{\dd P}{\dd Q} \right)^{\frac{1}{1-t}}\right]} \dd Q \nonumber \\
        &= \frac{\mathbb{E}_{Q} \left[ \left\lvert \log \frac{\dd P}{\dd Q} \right\rvert \left( \frac{\dd P}{\dd Q} \right)^{\frac{1}{1-t}} \right]}{\mathbb{E}_Q \left[\left( \frac{\dd P}{\dd Q} \right)^{\frac{1}{1-t}}\right]}. \label{eq:problematicdenominator}
    \end{align} 
    Note that the denominator in~\eqref{eq:problematicdenominator} causes no problems because $\mathbb{E}_Q\left[\left( \frac{\dd P}{\dd Q} \right)^{\frac{1}{1-t}}\right] = \exp(\frac{t}{1-t} D_{\frac{1}{1-t}}(P||Q)) < \infty$. We see that, for any $\delta \in (0, 1]$,
    \begin{align}
        \mathbb{E}_Q \left[\left\lvert \log \frac{\dd P}{\dd Q} \right\rvert \left( \frac{\dd P}{\dd Q} \right)^{\frac{1}{1-t}}\right] 
        &\leq \frac{1}{\delta \ln 2} \mathbb{E}_Q \left[ \left( \frac{\dd P}{\dd Q} \right)^{\frac{1}{1-t} + \delta} \mathbbm{1}_\qty{\frac{\dd P}{\dd Q} \geq 1} \right] + \frac{1}{\delta \ln 2} \mathbb{E}_Q \left[ \left( \frac{\dd P}{\dd Q} \right)^{\frac{1}{1-t}-\delta}\mathbbm{1}_\qty{0 < \frac{\dd P}{\dd Q} < 1} \right] \label{eq:sharploginequality} \\
        &\leq \frac{1}{\delta \ln 2} \left( \mathbb{E}_Q \left[\left( \frac{\dd P}{\dd Q} \right)^{\frac{1}{1-t}+\delta }\right] + 1  \right) \nonumber \\
         &= \frac{1}{\delta \ln 2} \left( \exp( \left(\delta + \frac{t}{1-t} \right) D_{\frac{1}{1-t} + \delta}(P||Q) ) \right) < \infty. \nonumber
    \end{align}
    Here,~\eqref{eq:sharploginequality} follows from the inequalities, for any $x \geq 0$ and $\delta \in (0, 1]$,
    \begin{equation*}
       ( \log x )^+ \leq \frac{x^\delta}{\delta \ln 2}\mathbbm{1}_{\qty{x \geq 1}}, \quad \quad ( \log x )^- \leq \frac{x^{-\delta}}{\delta \ln 2} \mathbbm{1}_{\qty{0 < x < 1}}.
    \end{equation*}
    \par
    Finally, for $t \geq 1$, equivalently $\alpha \in (0, 1/2]$, we use the lower bound on $\mathbb{E}[K]$ of~\cite[Thm. III.1]{goc2024channel} to obtain that
    \begin{equation}
        \mathbb{E}[K_n^t] \geq \mathbb{E}[K_n]^t \geq 2^{t D_\infty(P^{\otimes n} ||Q^{\otimes n})} = 2^{t n D_\infty(P||Q)}, \nonumber
    \end{equation}
    where the first inequality follows from Jensen's inequality. Using the same one-to-one optimal encoding of $K$ and steps as in the case $t < 1$, the result follows. 
\end{proof}

\bibliography{references.bib}
\bibliographystyle{IEEEtran}

\end{document}